\DeclareMathOperator*{\argmin}{arg\,min}
\begin{document}
\title{Facility Location Problem under Local Differential Privacy without Super-set Assumption}
\titlerunning{Facility Location Problem under Local Differential Privacy}
%
\author{Kevin Pfisterer\inst{1}\orcidID{0009-0009-2631-6533} \and \\
Quentin Hillebrand\inst{2}\orcidID{0000-0002-7747-4998}
\and \\
Vorapong Suppakitpaisarn\inst{2}\orcidID{0000-0002-7020-395X} 
}
\authorrunning{Pfisterer K. et al.}
%
\institute{Technical University of Munich, Germany\\
\email{kevin.pfisterer@tum.de}\and
The University of Tokyo, Japan\\
\email{quentin-hillebrand@g.ecc.u-tokyo.ac.jp,vorapong@is.s.u-tokyo.ac.jp}
}
\maketitle              

\begin{abstract}
In this paper, we introduce an adaptation of the facility location problem and analyze it within the framework of local differential privacy (LDP). Under this model, we ensure the privacy of client presence at specific locations. 
When $n$ is the number of points, Gupta et al.~\cite{gupta} established a lower bound of \(\Omega(\sqrt{n})\) on the approximation ratio for any differentially private algorithm applied to the original facility location problem. As a result, subsequent works have adopted the superset assumption, which may, however, compromise user privacy.
We show that this lower bound does not apply to our adaptation by presenting an LDP algorithm that achieves a constant approximation ratio with a relatively small additive factor. Additionally, we provide experimental results demonstrating that our algorithm outperforms the straightforward approach on both synthetically generated and real-world datasets.

\keywords{Privacy in location-based services \and Local differential privacy \and Facility location \and Approximation algorithms}
\end{abstract}
\section{Introduction}

The facility location problem is a well studied problem in combinatorial optimization and operations research. 
Given a set of locations, costs for opening facilities, and a metric for the distance between locations, the goal is to open a set of facilities and connect clients to the facilities with minimal costs.
The problem is NP-hard and therefore research focused on developing heuristic approaches and approximation algorithms since the 1960's \cite{stollsteimer,kuehn}. 
The problem finds application in several fields such as data mining, bioinformatics and machine learning \cite{esencayi}. 
It can be formalized as follows.
\begin{definition}[Facility location problem \cite{cohen}]
    Given the tuple $(V, d, \vec{f}, \vec{b})$ where $V$ is the set of locations with $|V| = n$, $(V,d)$ is a metric, $\vec{f} \in \mathbb{R}_{\ge 0}^{n}$ indicates the facility costs for every location $v \in V$, and $\vec{b} \in \{0,1\}^n$ indicates if a client is present at location $v \in V$. 
    The objective of the facility location problem is to find a set of locations $S \subseteq V$ which minimizes:
    \begin{equation}\label{eq:flp_cost}
        cost_d(S,\vec{b}) = \sum_{s \in S} f_s + \sum_{v \in V} b_vd(v,S)
    \end{equation}
    with $d(v,S) = \min\limits_{s \in S} d(v,s)$. 
    The first part of Equation \ref{eq:flp_cost} is called facility costs while the latter one is called connection costs.
\end{definition}

With the development of differential privacy (DP) by Dwork et al. \cite{dwork}, recent research applied DP to the facility location problem to ensure privacy for clients. 
The idea of differential privacy is to ensure that the inclusion or exclusion of a single data point does not significantly affect the solution, thereby protecting individual privacy. The concept measures a system's privacy leakage using a parameter called the privacy budget, denoted as \(\varepsilon\). An algorithm is referred to as an \(\varepsilon\)-DP mechanism if it has a privacy budget of \(\varepsilon\).

Differential private algorithms deploy basic privacy mechanisms such as the exponential mechanism \cite{mcsherry2007mechanism} or the Laplace mechanism \cite{dwork} depending on the domain of the solution. 
Both mechanisms introduce a degree of uncertainty about the correctness of the solution. The Laplace mechanism does this by adding a noise drawn from the Laplace distribution to the solution while the exponential mechanism assigns output probabilities to categorical solution proportional to their utility.

DP requires a trusted curator that runs the algorithm and has access to the private information.
This introduces a single point of failure and makes the curator prone to malicious attacks or human errors. 
Local differential privacy (LDP) \cite{kasiviswanathan} solves this issue by restricting the access of private data to the clients themselves. 
LDP algorithms are split into two parts.
The first one runs locally and applies a privacy mechanisms to the private data to generate a noisy version. 
The noisy data is then sent to an aggregation server that computes a solution to the problem based on the noisy input.

There are works applying DP to the facility location problem.
Gupta et al. \cite{gupta} were the first to apply DP to the uniform facility location problem. In this setting all locations have the same facility costs. 
They showed that any differentially private algorithm for the uniform facility location problem has an approximation ratio of $\Omega(\sqrt{n})$ when $n$ is the number of points.  

This lead to the introduction of the super-set output assumption \cite{gupta,cohen}. Here, the output of the algorithm for the facility location problem is a set $R \subseteq V$ of potential facilities. 
Every client is then connected according to a predefined connection rule to one of the facilities in $R$. 
The actual set of opened facilities $S$ is then the set of locations in $R$ that have at least one connected client. 
Under this super-set output setting, they provide an $\varepsilon$-DP algorithm with expected costs of at most 
\[
    OPT \cdot O(\log n \log \Delta) \cdot \frac{\log \Delta}{\varepsilon}\log{\frac{n \log^2{\Delta}}{\varepsilon}}
\]
where $OPT$ is the optimal value and $\Delta = \max\limits_{u,v\in V}{d(u,v)}$.
It uses results from Fakcharoenphol et al. \cite{fakcharoenphol} to approximate any arbitrary metric by a distribution over hierarchically well-separated trees (HST) with distortion $O(\log n)$. 
The algorithm then takes an HST as input and outputs a super-set of facilities.

In \cite{esencayi}, Esencayi et al. improve upon the results of Gupta et al. for the facility location problem under the $\varepsilon$-DP model with super-set output setting.
They showed that there exists an algorithm with an approximation ratio of $O(\frac{1}{\varepsilon})$ under the HST metric and therefore an algorithm with approximation ratio of $O(\frac{\log n}{\varepsilon})$ for any arbitrary metric.
Furthermore, they proved that the approximation ratio of any $\varepsilon$-DP algorithm is lower bounded by $\Omega(\frac{1}{\sqrt{\varepsilon}})$ even under an HST metric and the super-set output setting.

Since these approaches first convert the arbitrary metric into an HST-metric, they cannot perform better than $O(\log n)$.
This leaves the question upon whether the approximation ratio under the super-set output setting needs to grow with $n$.
Recently, Pasin Manurangsi in \cite{manurangsi2025} answered this question positively by providing a lower bound of $\tilde{\Omega}(min\{\log n,\sqrt{\frac{\log n}{\varepsilon}}\})$ on the expected approximation ratio of any $\varepsilon$-DP algorithm. Since $\varepsilon$-LDP is more restrictive than $\varepsilon$-DP, the lower bound also applies to $\varepsilon$-LDP algorithms. 
This further validates our approach of constructing a problem that finds real-world application and where the super-set output setting is no longer necessary.

Cohen-Added et al. \cite{cohen} were the first to study the facility location problem under local differential privacy (LDP) in 2022.
They provide an $\varepsilon$-LDP algorithm that achieves an $O(n^{1/4}/\varepsilon^2)$ approximation ratio under the HST metric for the non-uniform facility location problem. The algorithm applies the randomized response mechanism, a special variant of the exponential mechanism, on the location side to private data. It uses an HST together with the noisy data from every location to output a super-set of facilities. The clients are then connected based on a lowest common ancestor rule to facilities.
Furthermore, they proved a lower bound of $O(n^{1/4}/\sqrt{\varepsilon})$ for the approximation ratio of any non-interactive $\varepsilon$-LDP algorithm. 

\subsection{Our Contribution} 
Although most works on differentially private facility location algorithms assume the super-set assumption, this compromises user privacy. Revealing only facilities that connect to at least one user discloses information about users. This concern motivates us to explore a practical setting for the facility location problem, where we can design an LDP algorithm with a constant approximation ratio and a relatively small additive factor. We show experimentally that the proposed algorithm outperforms the straightforward solution on synthetically generated data and a dataset based on real world data from the city Chiang Mai, Thailand.

We introduce the Facility Location Problem with Linear Facility
Costs.  
In this setting, the objective is to assign a capacity to each opened facility, enabling it to serve up to its designated limit of clients. Additionally, facility costs scale linearly with capacity, meaning that the higher a facility's capacity, the greater the cost to establish it.  
We demonstrate that this problem is no longer NP-hard and can be optimally solved in polynomial time. Also, the negative result by Gupta et al. no longer applies in our setting, allowing us to design efficient \(\varepsilon\)-LDP algorithms without relying on the super-set output setting.

To obtain an $\varepsilon$-LDP algorithm for this setting, one can deploy the Laplace mechanism locally and then compute the capacities based on the noisy input. 
When the algorithm operates with a failure probability \(\alpha\), we show that this simple approach achieves an expected upper bound on the cost of \(\left(1 + \frac{2}{\varepsilon} \ln{\frac{2n}{\alpha}}\right) OPT\). The resulting approximation ratio is \(O(\log n)\).

We propose an algorithm to further improve that trivial algorithm. For this algorithm, we assume that, for an input where every location $v \in V$ has at least $\gamma^2\ln^2{n}$ locations with a distance of at most $\delta$ away, we propose an $\varepsilon$-LDP algorithm that bounds the total costs by $(1 + \frac{2}{\varepsilon}\ln{\frac{2n}{\alpha}}\frac{1}{\gamma\ln{n}})OPT + \delta n(\frac{2}{\varepsilon}\ln{\frac{2n}{\alpha}}\frac{1}{\gamma\ln{n}} + 4b_{avg})$, when $b_{avg}$ is the average number of clients per location. We observe that the approximation ratio \(\left(1 + \frac{2}{\varepsilon} \ln{\frac{2n}{\alpha}} \frac{1}{\gamma \ln n} \right) = O(1)\), and the additive term \(\delta n(\frac{2}{\varepsilon}\ln{\frac{2n}{\alpha}}\frac{1}{\gamma\ln{n}} + 4b_{avg})\) remains small compared to overall cost when \(\delta\) is sufficiently small.

Finally, we demonstrate that our \(\varepsilon\)-LDP algorithm outperforms the straightforward approach on both synthetically generated datasets and a real-world instance based on data from Chiang Mai, Thailand. We use the Matérn cluster point process to generate clustered instances that simulate cities with densely populated neighborhoods. Furthermore, we use the Poisson point process to generate instances with uniformly at random distributed locations.
We show that, for both generations processes with varying generation parameters and for the real-world instance, our theorems provide input parameters for our $\varepsilon$-LDP algorithm resulting in lower cost solutions compared to the straightforward approach.

\section{Local Differential Privacy}
The following definition formally introduces $L_1$ local differential privacy (LDP). 
\begin{definition}[$L_1$-LDP \cite{kasiviswanathan}]
Let \(\varepsilon > 0\). A randomized query \(\mathcal{R}\) satisfies $L_1$ \(\varepsilon\)-local differential privacy (\(\varepsilon\)-LDP) if, for any possible local inputs \(b,b' \in \mathbb{Z}\) such that $|b - b'| = 1$, and any possible outcome set \(S\), 
    $$\Pr[\mathcal{R}(b) \in S] \leq e^{\varepsilon} \Pr[\mathcal{R}(b') \in S].$$
An algorithm \(\mathcal{A}\) is said to be $L_1$ \(\varepsilon\)-LDP if, for any local node, and any sequence of queries \(\mathcal{R}_1, \dots, \mathcal{R}_\kappa\) posed the node, where each query \(\mathcal{R}_j\) satisfies \(\varepsilon_j\)-local differential privacy (for \(1 \leq j \leq \kappa\)), the total privacy loss is bounded by \(\varepsilon_1 + \dots + \varepsilon_\kappa \leq \varepsilon\).    
The algorithm is referred to as non-interactive if the number of queries to each user is one, and those queries are independent to each other.
\end{definition}
In the following section, we explain our rationale for employing the notion of \( L_1 \)-LDP in this paper. 

\begin{definition}[$L_1$-Sensitivity \cite{dwork}] The \( L_1 \)-sensitivity of a function \( f: \mathbb{Z} \to \mathbb{R} \) is the smallest value \( S(f) \) such that, for all \( b, b' \in \mathbb{Z} \) such that $|b - b'| = 1$, the following holds:  
\[
\|f(b) - f(b')\|_1 \leq S(f)
\]
\end{definition}

\begin{definition}[$L_1$ Local Laplacian Query \cite{hillebrand2023unbiased}]
For any function \( f: \mathbb{Z} \to \mathbb{R} \) and input \( b \in \mathbb{Z} \), the following mechanism, known as the Laplace mechanism, is defined as:  
\[
LM_f(b) = f(b) + Y
\]
where \( Y \) is drawn from \( \text{Lap}(S(f)/\varepsilon) \). The $L_1$ local Laplace mechanism satisfies $L_1$ \(\varepsilon\)-LDP.
\end{definition}

Since this paper exclusively employs the $L_1$ version of $\varepsilon$-LDP, we omit the $L_1$ prefix. In other words, when we mention $\varepsilon$-LDP, sensitivity, and the local Laplacian query, we are referring to the $L_1$ $\varepsilon$-LDP, $L_1$-sensitivity, and the $L_1$ local Laplacian query, respectively.

\section{Our Setting: Facility Location with Linear Facility Costs (FL-Linear)}
In this section, we formally present our adaptation of the traditional facility location problem, referred to as Facility Location with Linear Facility Costs (FL-Linear). We later demonstrate that this problem can be optimally solved in polynomial time.

\subsection{Problem Statement}

\begin{definition}[FL-Linear]\label{def:problem}
    The Facility Location with Linear Facility Costs is defined by the input tuple $(V, d, \vec{f}, \vec{b})$ where $V$ defines the set of locations with $|V| = n$, $(V, d)$ is a metric, $\vec{f}$ indicates the facility costs for every location $v \in V$, and $\vec{b} = [b_v]_{v \in V} \in \mathbb{N}^n_{>0}$ indicates the number of clients present at location $v \in V$.
    The goal is to find a set of locations $S \subseteq V$ with capacities $(k_s)_{s\in S}$ and a connection function $h: V \rightarrow S$ that minimize the costs
    \begin{equation}
        cost_{d}(S, h, \vec{b}) = \sum_{s \in S} k_sf_s + \sum_{v \in V} b_vd(v, h(v))
    \end{equation}
    Furthermore, every facility must be able to serve all of its connected clients. Let $L_v = \{u\in V : h(u) = v\}$ denote the set of connected locations to a facility $v \in S$. Then the following equation must hold for all $v \in S$:
    \begin{equation}
        \sum_{u \in L_v} b_u \le k_v
    \end{equation}
\end{definition}

In contrast to the original problem, every facility $v \in S$ now has a capacity $k_v$ assigned to it. Furthermore, the facility to which a location is connected to is now given as an output by the connection function $h$. 
Previous research on the original problem defined a connection rule (e.g. connect location to closest open facility) and only outputted a super-set of opened facilities.

Since a location appears in the dataset only if at least one person resides there, we assume that every location has at least one client present, i.e., \( b_v > 0 \). In Section \ref{chapter:other_results}, we relax this assumption and derive a bound based on a client-location ratio and a Bernoulli distributed client presence.

The FL-Linear problem finds application in real-world scenarios such as in the setting of placing evacuation shelters. The costs for providing space, food, water and, equipment like first aid kits depends on the amount of people it is designed to shelter. Therefore, it is reasonable to determine the facility cost based on the number of individuals relocating to the facility.

\paragraph{Privacy Assumption} We adopt the assumption from previous works that the set of locations, the distances between them, and the facility costs at each location are publicly available information. The only data kept private is the number of individuals at each location, denoted as \(\mathbf{b}\). This assumption is motivated by the problem of placing facilities for evacuation \cite{real_world_data}. In such scenarios, the distances between buildings and the costs of constructing evacuation facilities at each site are known. However, the presence of an individual at a specific location is sensitive information. Revealing the number of individuals at each location could risk disclosing this sensitive data. 

We consider the value \( b_u \in \mathbb{Z} \) as local data. To safeguard an individual's location, we seek a mechanism that ensures others cannot distinguish between \( b_u \) and \( b'_u \) when \( |b_u - b'_u| = 1 \), which corresponds to the presence or absence of the person at node \( u \). For this reason, we adopt the privacy notion of \( L_1 \) \( \varepsilon \)-LDP in this work.

\subsection{Non-Private Optimal Algorithm}
By introducing linear facility costs based on the capacity, the problem becomes easier and we can find a polynomial time algorithm to solve it. 
For a fixed location $v$ the costs it induces is independent on the connection of other locations. We say that $v \in V$ is connected to $u \in V$ if $h(v) = u$.
Finding the optimal facility $u$, a location $v$ should be connected to, is now a local decision: 
\[
    h(v) = \argmin_{u \in V} (f_u + d(u,v))
\]

For each location, we determine the optimal location to which it is connected. We define the set of locations that have at least one connection as  
\[
M = \{v \in V : \exists u \in V \text{ such that } h(u) = v\} = \{v \in V : h(v) = v\}.
\]
For any marked location \( v \in M \), let \( L_v = \{u \in V : h(u) = v\} \) denote the set of locations assigned to \( v \). The facility at \( v \) is then opened with a capacity corresponding to the total demand of the clients connected to it:  
\[
k_v = \sum_{u \in L_v} b_u.
\]
Algorithm \ref{alg:opt} summarizes the described methods.
In the base algorithm, no privacy mechanism is applied. Consequently, the algorithm has full knowledge of the exact client demands \( b_u \) and can set facility capacities accordingly. As a result, every location is optimally connected, leading to a globally optimal solution. The computation can be performed in polynomial time.

\begin{algorithm}
    \DontPrintSemicolon
    \KwData{$V$, $d$, $\vec{f}$, $\vec{b}$}
    \KwResult{connection function $h$, capacities $k$}
    \Begin{
        \For{$v \in V$}{
            $h(v) \longleftarrow \argmin\limits_{u \in V} f_u + d(u,v)$\;
        }
        $M \longleftarrow \{v\in V : h(v) = v\}$\;
        \For{$v \in M$}{
            $L_v\longleftarrow \{u \in V:h(u)=v\}$\;
            $k_v \longleftarrow \sum_{u \in L_v} b_u$\;
        }
        \Return $h$, $k$
    }
    \caption{non-private optimal algorithm}\label{alg:opt}
\end{algorithm}

\section{Straightforward Algorithm:  Laplace Mechanism with Margin}\label{chapter:naive_algo}
To keep the amount of clients $b_v$ of a location $v$ private, 
 we introduce a local differential private algorithm that uses the Laplace mechanism to ensure privacy and opens facilities based on a noisy number of clients.
Every location $v \in V$ adds a Laplacian noise parameterized by the privacy budget $\varepsilon$ to their private number of clients $b_v$ to generate a noisy variant $b'_v$.
Afterwards, it sends $b'_v$ to the aggregation server. On the server side the optimal assignments and capacities are computed based on $\mathbf{b}' = [b'_v]_{v \in V}$.

\begin{algorithm}
    \DontPrintSemicolon
    \KwData{$V$, $d$, $\vec{f}$, $\vec{b}$, privacy budget $\varepsilon$, failure probability $\alpha$}
    \KwResult{connection function $h'$, capacities $k'$}
    \Begin{
        \textbf{Location Side:}\;
        \For{$v \in V$}{
            $b'_v \longleftarrow b_v + Lap(1/\varepsilon)$\;
            Send $b'_v$ to server\;
        }
        \textbf{Server Side:}\;
        \For{$v \in V$}{
            $h'(v) \longleftarrow \argmin\limits_{u \in V} f_u + d(u,v)$\;
        }
        $M' \longleftarrow \{v\in V : h'(v) = v\}$\;
        \For{$v \in M'$}{
            $L_v'\longleftarrow \{u \in V:h'(u)=v\}$\;
            $N'_v \longleftarrow \sum_{u \in L'_v} b'_u$\;
            $k'_v \longleftarrow N'_v + \frac{2} {\varepsilon}\sqrt{|L'_v|}\ln{\frac{2n}{\alpha}}$\;
        }
        \Return $h'$, $k'$
    }
    \caption{$\varepsilon$-LDP algorithm with Laplacian mechanism and margin}\label{alg:no_reconn}
\end{algorithm}

Algorithm \ref{alg:no_reconn} mirrors the steps of the non-private algorithm by first computing the optimal connections between locations. We leverage the key observation that for any location \(v\), 
\[
h(v) = \argmin_{u \in V} \big(b_vf_u + b_vd(u,v)\big) = \argmin_{u \in V} \big(f_u + d(u,v)\big) = h'(v),
\]
which means we do not need direct access to the private value \(b_v\) to calculate $h'(v)$. The main distinction between the two approaches arises during the capacity computation. In the optimal assignment, the capacity can be set exactly to the number of connected clients. However, in the private setting, the noisy estimate \(b'_v\) may underestimate the true client count \(b_v\). Without an additional margin, this underestimation would lead to an immediate failure at location \(v\). Consequently, we add a margin of 
$\frac{2}{\varepsilon}\sqrt{|L'_v|}\ln{\frac{2n}{\alpha}}
$ to the noisy count of connected clients \(N'_v\) to ensure that the total failure probability does not exceed \(\alpha\).

\subsection{Analysis}
We first show that Algorithm \ref{alg:no_reconn} ensures privacy for the presence of individuals.
\begin{theorem}\label{theorem:trivial_privacy}
    Algorithm \ref{alg:no_reconn} satisfies $\varepsilon$-LDP.
\end{theorem}
\begin{proof}
    The presence or abscence of an individual at location $v$ changes $b_v$ by at most 1. Hence, the $L_1$-sensitivity is 1. Algorithm \ref{alg:no_reconn} locally adds Laplacian noise drawn from $Lap(1/\varepsilon)$ to $b_v$ and therefore satisfies $\varepsilon$-LDP. 
\end{proof}
In the following we show that the output by Algorithm \ref{alg:no_reconn} satisfies the capacity constraint with a probability of $1-\alpha$ and if they are satisfied the expected costs are bounded by $(1+\frac{2}{\varepsilon}\ln{\frac{2n}{\alpha}})OPT$.
Let $E_v$ denote the event that a failure occurs at location $v \in M'$. This means more clients are connected to $v$ than it has capacity: $\sum_{u \in L'_v} b_u > k'_v$. 
For a failure to occur, the Laplacian noise added to the clients in $L'_v$ must be larger than the margin added to $v$, i.e.
\[
    \left|\sum_{u \in L'_v}Lap(1/\varepsilon)\right| > \frac{2}{\varepsilon}\sqrt{|L'_v|}\ln{\frac{2n}{\alpha}} \label{eqn:noise}
\]
We first bound $\Pr[E_v]$, and then apply union bound to derive a bound for the total failure probability. For the first part, we use the following results.

\begin{theorem}[Xian et al. \cite{xian2024differentially}]\label{theorem:sum}
    Let $X_1, ..., X_k \sim Lap(b)$ be independent, then for $t \ge 2b\sqrt{k}\ln{\frac{2k}{\beta}}$,
    \begin{equation}
    \Pr\left[\left|\sum_{i=1}^{k} X_i\right| \le t\right] \ge 1-\beta
    \end{equation}
\end{theorem}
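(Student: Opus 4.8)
The plan is to bound the one-sided deviation $\Pr[\sum_i X_i \ge t]$ via the Chernoff method and then symmetrize. Since each $X_i$ is a mean-zero Laplace variable, its moment generating function is $\mathbb{E}[e^{s X_i}] = (1 - b^2 s^2)^{-1}$ for $|s| < 1/b$, and by independence the sum $S = \sum_{i=1}^k X_i$ satisfies $\mathbb{E}[e^{sS}] = (1 - b^2 s^2)^{-k}$ on the same range. Applying Markov's inequality to $e^{sS}$ gives, for every $0 < s < 1/b$,
$$\Pr[S \ge t] \le e^{-st}(1 - b^2 s^2)^{-k}.$$

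First I would linearize the log-MGF using the elementary inequality $-\ln(1 - x) \le 2x$, valid for $0 \le x \le 1/2$. Restricting to $s \le 1/(\sqrt{2}\,b)$ so that $b^2 s^2 \le 1/2$, this yields $(1 - b^2 s^2)^{-k} \le e^{2kb^2 s^2}$ and hence $\Pr[S \ge t] \le \exp(-st + 2kb^2 s^2)$, exhibiting $S$ as a sub-exponential variable with a variance proxy of order $kb^2$.

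Next I would optimize the exponent $-st + 2kb^2 s^2$ over the admissible range of $s$. The unconstrained minimizer is $s^\star = t/(4kb^2)$, which lies in $(0, 1/(\sqrt{2}\,b)]$ precisely when $t$ is not too large; in that sub-Gaussian regime the bound becomes $\Pr[S \ge t] \le \exp(-t^2/(8 k b^2))$. Substituting the threshold $t = 2b\sqrt{k}\ln\frac{2k}{\beta}$ collapses the exponent to $-\tfrac{1}{2}\ln^2\frac{2k}{\beta}$, which is at most $\beta/2$ once one verifies the resulting inequality $\tfrac{1}{2}\ln^2\frac{2k}{\beta} \ge \ln\frac{2}{\beta}$. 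For the complementary range, where $s^\star$ exceeds $1/(\sqrt{2}\,b)$, I would instead evaluate the bound at the boundary $s = 1/(\sqrt{2}\,b)$, where the defining condition $\ln\frac{2k}{\beta} > \sqrt{2k}$ forces the exponent well below $\ln(\beta/2)$. Finally, the symmetry of the Laplace distribution gives $\Pr[S \le -t] = \Pr[S \ge t]$, so $\Pr[|S| > t] \le 2\Pr[S \ge t] \le \beta$, equivalently $\Pr[|S| \le t] \ge 1 - \beta$, which is the claim.

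The main obstacle I anticipate is not the concentration machinery itself but pinning down the constants so that the single factor $2$ in the threshold suffices uniformly across both regimes; in particular the boundary sub-exponential case and the verification of $\tfrac{1}{2}\ln^2\frac{2k}{\beta} \ge \ln\frac{2}{\beta}$ must be handled with care, and it is precisely here that the specific shape $2b\sqrt{k}\ln\frac{2k}{\beta}$ of the threshold, with the factor $2k$ inside the logarithm, earns its keep.
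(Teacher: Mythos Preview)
The paper itself does not prove this theorem; it is quoted from Xian et al.\ and then used as a black box in the subsequent failure-probability bound. So there is no in-paper argument to compare against, and your proposal is effectively supplying a proof the paper omits.

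Your Chernoff route is the standard one for Laplace sums, and the computations through $\Pr[S\ge t]\le\exp\bigl(-t^{2}/(8kb^{2})\bigr)$ in the sub-Gaussian range are correct. The concern you flag about the constants is real, not cosmetic. Writing $x=\ln(2/\beta)$ and $y=\ln k$, the inequality $\tfrac12\ln^{2}\tfrac{2k}{\beta}\ge\ln\tfrac{2}{\beta}$ becomes $(x+y)^{2}\ge 2x$; for $k\ge 2$ the quadratic $x^{2}+(2y-2)x+y^{2}$ has discriminant $4-8y<0$ (since $y\ge\ln 2$) and the inequality holds, but for $k=1$ it \emph{fails} whenever $\beta>2e^{-2}\approx 0.27$. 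Likewise, in your boundary evaluation the exponent is $-\sqrt{2k}\,\ln\tfrac{2k}{\beta}+k$, and the needed comparison reduces to $k+\ln k\ge\sqrt{2k}$, which again holds for all $k\ge 2$ but not for $k=1$. So neither branch of your case split covers $k=1$ uniformly in $\beta$.

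The fix is a one-line patch: handle $k=1$ directly via the exact Laplace tail $\Pr[|X_{1}|>t]=e^{-t/b}$, which at $t=2b\ln(2/\beta)$ gives $(\beta/2)^{2}\le\beta$ for $\beta\le 1$. With that case set aside, your sub-Gaussian and sub-exponential branches together cover all $k\ge 2$, and the argument closes.
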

From Theorem \ref{theorem:sum}, a bound on $\Pr[E_v]$ follows by setting $t$ according to the margin.

\begin{theorem}\label{theorem:no_reconn_failure}
    The total failure probability of Algorithm \ref{alg:no_reconn} is bounded by $\alpha$.
\end{theorem}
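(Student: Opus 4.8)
The plan is to bound the per-location failure probability $\Pr[E_v]$ via Theorem \ref{theorem:sum} with a carefully chosen confidence parameter, and then combine these bounds through a union bound whose weights sum to $n$.

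First I would rewrite the failure event $E_v$ purely in terms of the injected noise. Writing $Y_u \sim \mathrm{Lap}(1/\varepsilon)$ for the noise added at location $u$, we have $N'_v = \sum_{u \in L'_v}(b_u + Y_u)$, so that $\sum_{u \in L'_v} b_u - k'_v = -\sum_{u\in L'_v} Y_u - \frac{2}{\varepsilon}\sqrt{|L'_v|}\ln\frac{2n}{\alpha}$. Hence $E_v$, namely $\sum_{u\in L'_v} b_u > k'_v$, forces $\bigl|\sum_{u\in L'_v} Y_u\bigr| > \frac{2}{\varepsilon}\sqrt{|L'_v|}\ln\frac{2n}{\alpha}$, reducing the question to a tail bound on a sum of $|L'_v|$ i.i.d.\ Laplace variables.

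Next I would invoke Theorem \ref{theorem:sum} with $b = 1/\varepsilon$, $k = |L'_v|$, and, this being the one slightly delicate choice, confidence parameter $\beta_v = \alpha|L'_v|/n$. With this choice the threshold required by that theorem, $2b\sqrt{k}\ln\frac{2k}{\beta_v} = \frac{2}{\varepsilon}\sqrt{|L'_v|}\ln\frac{2n}{\alpha}$, exactly matches the margin $k'_v - N'_v$, so the theorem applies and yields $\Pr[E_v] \le \beta_v = \alpha|L'_v|/n$. One checks that $\beta_v \le \alpha \le 1$ since $|L'_v|\le n$, and that $|L'_v|\ge 1$ because $v\in M'$ is connected to at least itself, so the hypotheses are met.

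Finally, a union bound over $v\in M'$ gives $\Pr\bigl[\bigcup_{v\in M'} E_v\bigr] \le \sum_{v\in M'}\beta_v = \frac{\alpha}{n}\sum_{v\in M'}|L'_v|$. Because every location is assigned to exactly one marked facility, the sets $\{L'_v\}_{v\in M'}$ partition $V$, whence $\sum_{v\in M'}|L'_v| = n$ and the total failure probability is at most $\alpha$. I do not expect a genuine obstacle here; the only non-mechanical step is recognizing that scaling $\beta_v$ with $|L'_v|$ is precisely what makes the $\sqrt{|L'_v|}$-sized margins aggregate correctly under the union bound, so that the $|L'_v|$ factors telescope against the uniform $1/n$ weight.
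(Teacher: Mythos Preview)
Your proof is correct and follows essentially the same route as the paper: reduce the failure event $E_v$ to a Laplace tail, apply Theorem~\ref{theorem:sum} with $k=|L'_v|$ and $\beta_v=\alpha|L'_v|/n$ so that the threshold matches the margin exactly, and then union-bound using $\sum_{v\in M'}|L'_v|=n$. If anything, your write-up is slightly more explicit in verifying the side conditions ($\beta_v\le 1$, $|L'_v|\ge 1$, and the partition property of the $L'_v$).
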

\begin{proof}
Let $v \in M'$, then it will be opened with a capacity of 
\[
    k'_v = N'_v + \frac{2}{\varepsilon}\sqrt{|L'_v|}\ln{\frac{2n}{\alpha}}
\]
A failure occurs if the capacity is smaller than the actual number of connected clients. 
\[
    \Pr\left[N_v > N'_v + \frac{2}{\varepsilon}\sqrt{|L'_v|}\ln{\frac{2n}{\alpha}}\right]
    \le \Pr\left[\left|\sum_{u \in L'_v}Lap(1/\varepsilon)\right| > \frac{2}{\varepsilon}\sqrt{|L'_v|}\ln{\frac{2n}{\alpha}}\right]
\]
Given the total failure probability $\alpha$, by Theorem \ref{theorem:sum} with $k = |L'_v|$ and $\beta_v = \frac{|L'_v|}{n} \alpha$ it follows that for $t \ge \frac{2}{\varepsilon}\sqrt{|L'_v|}\ln{\frac{2n}{\alpha}}$, 
\[
    \Pr[E_v] \le \frac{|L'_v|}{n} \alpha.
\]
By applying union bound we get a bound on the total failure probability,
\[
    \sum_{v\in M'}\Pr[E_v] \le \frac{\alpha}{n}\sum_{v \in M'} |L'_v| = \alpha.
\]

\end{proof}

\begin{theorem}\label{theorem:no_reconn_costs}
    Assuming no failures occur, the expected cost of the solution produced by Algorithm \ref{alg:no_reconn} is at most 
\[
\left(1 + \frac{2}{\varepsilon}\ln\frac{2n}{\alpha}\right)OPT.
\]
\end{theorem}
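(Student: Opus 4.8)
The plan is to bound the expected cost of Algorithm~\ref{alg:no_reconn} by comparing it term-by-term against the optimal cost $OPT$, exploiting the key structural fact already established in the text: the noisy connection function $h'$ coincides with the optimal connection function $h$, since $\argmin_{u}(f_u + d(u,v))$ does not depend on the private value $b_v$. This means the connection costs $\sum_{v} b_v d(v, h'(v))$ are \emph{exactly} the optimal connection costs, and the set $M'$ of opened facilities equals the optimal set $M$. Consequently, the only discrepancy between the algorithm's cost and $OPT$ comes from the inflated capacities.

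First I would write the total cost of the algorithm's solution as
\[
cost_d(S, h', \vec{b}) = \sum_{v \in M'} k'_v f_v + \sum_{v \in V} b_v d(v, h'(v)),
\]
and substitute $k'_v = N'_v + \frac{2}{\varepsilon}\sqrt{|L'_v|}\ln\frac{2n}{\alpha}$. Taking expectations and using $\mathbb{E}[N'_v] = \mathbb{E}[\sum_{u \in L'_v} b'_u] = \sum_{u \in L'_v} b_u = N_v$ (since the Laplace noise is mean-zero), the facility-cost expectation splits into a ``true capacity'' piece $\sum_{v} N_v f_v$ plus a ``margin'' piece $\sum_{v} \frac{2}{\varepsilon}\sqrt{|L'_v|}\ln\frac{2n}{\alpha} \, f_v$. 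The first piece, together with the connection costs, is precisely $OPT$, because the non-private algorithm opens $v$ with exactly capacity $N_v = \sum_{u \in L_v} b_u$ and achieves the optimum.

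The main work is then to bound the margin contribution $\sum_{v \in M'} \frac{2}{\varepsilon}\sqrt{|L'_v|}\ln\frac{2n}{\alpha}\, f_v$ by $\frac{2}{\varepsilon}\ln\frac{2n}{\alpha}\cdot OPT$. Here I would use the assumption $b_v > 0$, i.e. $b_v \ge 1$ for every location, which gives $N_v = \sum_{u \in L'_v} b_u \ge |L'_v| \ge \sqrt{|L'_v|}$. Hence $\sqrt{|L'_v|}\, f_v \le N_v f_v$, so the margin piece is at most $\frac{2}{\varepsilon}\ln\frac{2n}{\alpha}\sum_{v \in M'} N_v f_v$, and $\sum_{v \in M'} N_v f_v \le OPT$ since it is the facility-cost component of the optimal solution. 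Combining the pieces yields the claimed bound.

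The main obstacle I anticipate is justifying the inequality $\sqrt{|L'_v|} \le N_v$ rigorously and cleanly: it rests entirely on the integrality and positivity assumption $b_u \ge 1$, and it is worth stating explicitly that $N_v = \sum_{u \in L'_v} b_u \ge |L'_v| \ge \sqrt{|L'_v|}$ whenever $|L'_v| \ge 1$ (which holds for every $v \in M'$ by definition of $M'$). A secondary subtlety is that expectations are taken conditioned on the no-failure event; I would either argue that the margin bound holds deterministically (it depends only on $|L'_v|$, which is fixed once the public connection structure is determined, not on the realized noise) so that conditioning does not affect it, or note that $|L'_v|$ and $h'$ are entirely determined by public data $(V, d, \vec{f})$ and are therefore independent of the Laplace noise altogether. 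This independence is what makes the term-by-term comparison valid and is the cleanest way to handle the conditioning.
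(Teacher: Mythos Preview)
Your proposal is correct and follows essentially the same argument as the paper: identify $h'=h$ and $M'=M$ from public data, use $\mathbb{E}[N'_v]=N_v$ to recover $OPT$, and bound the margin term $\sum_{v\in M'}\sqrt{|L'_v|}f_v\le OPT$ via $b_u\ge 1\Rightarrow N_v\ge |L'_v|\ge\sqrt{|L'_v|}$. Your explicit discussion of why conditioning on the no-failure event is harmless (since $h'$, $M'$, and $|L'_v|$ depend only on public data) is a welcome clarification that the paper leaves implicit.
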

\begin{proof} Recall that the optimal cost $OPT$ can be divided into the connection cost, denoted by $OPT_{conn}$ and the facility cost $OPT_{fac}$, i.e. $OPT = OPT_{conn} + OPT_{fac}$. 
    Computing the connection function $h': V\rightarrow V$ does not involve private information.
    Since no failure occurs, $h'$ is a valid solution and will be the same as the connection function $h$ from the optimal non-private algorithm. 
    Therefore, the connection costs are the same as in the optimal solution:
    \[
        \sum_{v\in V}b_vd(v,h'(v)) = OPT_{conn}
    \]
    Since, the connection function is the same, also the set of marked locations $M'$ is the same as $M$. Algorithm \ref{alg:no_reconn} opens $v \in M'$ with a capacity of $k'_v = N'_v + \frac{2}{\varepsilon}\sqrt{|L'_v|}\ln{\frac{2n}{\alpha}}$. This gives expected facility costs of
    \[
        E\left[\sum_{v \in M'} \left(N'_v + \frac{2}{\varepsilon}\sqrt{|L'_v|}\ln{\frac{2n}{\alpha}}\right)f_v\right] = OPT_{fac} + \frac{2}{\varepsilon}\ln{\frac{2n}{\alpha}}\sum_{v \in M'}\sqrt{|L'_v|}f_v
    \]
    Finally, we obtain that our costs are
    \[
        OPT_{conn} + OPT_{fac} + \frac{2}{\varepsilon}\ln{\frac{2n}{\alpha}}\sum_{v \in M'}\sqrt{|L'_v|}f_v \le \left(1 + \frac{2}{\varepsilon}\ln{\frac{2n}{\alpha}}\right)OPT.
    \]
    $\sum_{v \in M'}\sqrt{|L'_v|}f_v$ can be bounded with $OPT$ since every location has at least one client and therefore a facility is opened with a capacity of at least $|L'_v|$ in the optimal case. 
\end{proof}
In this paper, we do not incorporate penalty costs for facility failures. Our analysis can be extended to account for such failures by introducing an appropriate penalty function and using Theorem \ref{theorem:no_reconn_failure} to bound the expected total costs.

\section{Our Algorithm: $\varepsilon$-LDP Algorithm with Reconnection}\label{chapter:our_algo}
The margin added to every marked node $v \in M'$ in Algorithm \ref{alg:no_reconn} depends on the number of connected locations $|L'_v|$. 
This leaves the question whether the total margin added can be decreased while keeping the upper bound on the failure probability.
In the last step of the analysis of Algorithm \ref{alg:no_reconn}, we bounded $\sum_{v \in M'}\sqrt{|L'_v|}f_v$ with $OPT$. 
In this section, we describe how by merging close facilities, we can create less facilities with higher capacities and therefore find a better bound for $\sum_{v \in M'}\sqrt{|L'_v|}f_v$. 
This decreases the multiplicative error from $\mathcal{O}(\log{n})$ to constant under an additional assumption about the distribution of the locations while introducing an additive error.

\paragraph{Additional Assumption} In the following we assume that no location is isolated from all other locations. Formally, for a given $\delta > 0$, let $B(v, \delta)$ denote the ball of radius $\delta$ centered on the location $v \in V$. A location $u \in V$ is contained in the ball $B(v, \delta)$ if $d(u,v) \le \delta$. 
We assume that for $\delta > 0, \gamma \ge 1$ and every location $v \in V$: \begin{equation}\label{eq:local}
    |B(v, \delta)| \ge \gamma^2 \ln^2{n}
\end{equation}
With \(\gamma = 1\), \(n = 10,000,000\), and \(\delta = 1 \text{ km}\), our assumption implies that each household has at least 259.79 other households within a 1 km radius. 
Given that the average Japanese household size in 2022 was 2.25, this corresponds to approximately 585 inhabitants within the same area, resulting in a population density of 186 inhabitants per square kilometer. This confirms that our assumption remains valid even in small village settings. For cases focusing on densely populated areas, such as Tokyo, which has a population density of 6,300 people per square kilometer \cite{jap_pop}, our assumption is even more strongly supported.

\subsection{Description of $\varepsilon$-LDP Algorithm with Reconnection}
The $\varepsilon$-LDP algorithm with reconnection follows similar steps as Algorithm \ref{alg:no_reconn}. 
On the location side the Laplace mechanism is used to ensure $\varepsilon$-LDP. 
Then it computes the optimal assignment $\hat{h}$ and the set of marked locations $\hat{M}$ without using private data.

Instead of immediately opening every marked location with a margin as done in Algorithm \ref{alg:no_reconn}, we select a maximal set of marked locations that are pairwise at least a distance of \(2\delta\) apart. This ensures that no two facilities opened are within \(2\delta\) of each other. To compute this set, we first construct the graph
$G = \Bigl(\hat{M},\ \Bigl\{\{u,v\} : u,v \in \hat{M},\ d(u,v) \le 2\delta\Bigr\}\Bigr)$.
Then, we run a greedy algorithm to determine a maximal independent set \(I\) in \(G\). In this process, the algorithm repeatedly selects the node with the lowest facility value for which none of its neighbors has already been chosen, ensuring that every pair of nodes in \(I\) is separated by at least \(2\delta\).

Before computing the capacities for the facilities in \(I\), we update the connection function \(\hat{h}\). For every \(v \in I\), we connect all nodes in \(B(v,\delta)\) to \(v\). Moreover, all locations not covered by any ball are connected to the optimal location restricted to the set \(I\). Finally, similar to Algorithm \ref{alg:no_reconn}, we add a margin of 
$\frac{2}{\varepsilon}\sqrt{|\hat{L}_v|}\ln\frac{2n}{\alpha}$
to the capacity, on top of the number of connected clients \(\hat{N}_v\).

\begin{algorithm}[h]
    \DontPrintSemicolon
    \KwData{$V$, $d$, $\vec{f}$, $\vec{b}$, privacy budget $\varepsilon$, failure probability $\alpha$, $\delta$}
    \KwResult{connection function $\hat{h}$, capacities $\hat{k}$}
    \Begin{
        \textbf{Location Side:}\;
        \For{$v \in V$}{
            $b'_v \longleftarrow b_v + Lap(1/\varepsilon)$\;
            Send $b'_v$ to server\;
        }
        \textbf{Server Side:}\;
        \For{$v \in V$}{
            $\hat{h}(v) \longleftarrow \argmin_{u \in V} f_u + d(u,v)$\;
        }
        $\hat{M} \longleftarrow \{v\in V : \hat{h}(v) = v\}$\;
        $G \longleftarrow (\hat{M}, \{\{u, v\} : u,v\in \hat{M}, d(u,v)\le 2\delta\})$\;
        $I \longleftarrow$ maximal independent set of $G$\;
        \For{$v \in I$}{
            $\hat{h}(u) \longleftarrow v$ for all $u \in B(v, \delta)$\;
        }
        \For{$u \in V$ s.t. $\nexists v \in I$ with $u \in B(v, \delta)$}{
            $\hat{h}(u) \longleftarrow \argmin_{v \in I} f_v + d(u,v)$\;
        }
        
        \For{$v \in I$}{
            $\hat{L}_v\longleftarrow \{u \in V:\hat{h}(u)=v\}$\;
            $\hat{N}_v \longleftarrow \sum_{u \in \hat{L}_v} b'_u$\;
            $\hat{k}_v \longleftarrow \hat{N}_v + \frac{2} {\varepsilon}\sqrt{|\hat{L}_v|}\ln{\frac{2n}{\alpha}}$\;
        }
        \Return $\hat{h}, \hat{k}$
    }
    \caption{$\varepsilon$-LDP algorithm with reconnection}\label{alg:reconn}
\end{algorithm}

\subsection{Analysis}
The reconnection algorithm uses the same Laplace privacy mechanism as Algorithm \ref{alg:no_reconn}. Therefore, the proof of $\varepsilon$-LDP is straightforward.
\begin{theorem}
    Algorithm \ref{alg:reconn} satisfies $\varepsilon$-LDP.
\end{theorem}
\begin{proof}
    Similar to Theorem \ref{theorem:trivial_privacy}, the absence or presence can change $b_v$ by at most 1, resulting in a sensitivity of 1. 
    Algorithm \ref{alg:reconn} applies the local Laplace mechanism by adding Laplacian noise with parameter $1/\varepsilon$. 
\end{proof}

In the following, we analyze the costs of the output of $\varepsilon$-LDP algorithm with reconnection (Algorithm \ref{alg:reconn}). 

\begin{lemma}
    \(|\hat{L}_v| \ge \gamma^2 \ln^2 n\) for every \(v \in I\). \label{lem:sizeLv}
\end{lemma}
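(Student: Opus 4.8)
The plan is to show that the entire ball $B(v,\delta)$ ends up assigned to $v$, so that $B(v,\delta) \subseteq \hat{L}_v$; the claimed bound then follows immediately from the additional assumption in Equation~\ref{eq:local}. Indeed, every $v \in I$ is in particular a location of $V$ (since $I \subseteq \hat{M} \subseteq V$), so the assumption gives $|B(v,\delta)| \ge \gamma^2 \ln^2 n$, and it suffices to argue $B(v,\delta) \subseteq \hat{L}_v$.

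First I would record the key geometric consequence of $I$ being an independent set of $G$: any two distinct $v,w \in I$ are non-adjacent in $G$, so by the definition of the edge set, $d(v,w) > 2\delta$. Combined with the triangle inequality of the metric $(V,d)$, this makes the balls $\{B(v,\delta)\}_{v \in I}$ pairwise disjoint: if some $u$ lay in both $B(v,\delta)$ and $B(w,\delta)$, then $d(v,w) \le d(v,u) + d(u,w) \le 2\delta$, contradicting $d(v,w) > 2\delta$.

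Next I would track what the reconnection step of Algorithm~\ref{alg:reconn} does to a fixed node $u \in B(v,\delta)$. The first loop over $I$ sets $\hat{h}(u) \leftarrow v$ during the iteration for $v$; by the disjointness of the balls, no other iteration of that loop touches $u$, so this assignment is never overwritten. The subsequent loop only reassigns nodes that are \emph{not} contained in any ball, hence it leaves $u$ untouched as well. Therefore $\hat{h}(u) = v$ at termination, i.e. $u \in \hat{L}_v$. Since $u$ was arbitrary, $B(v,\delta) \subseteq \hat{L}_v$, and so $|\hat{L}_v| \ge |B(v,\delta)| \ge \gamma^2 \ln^2 n$.

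The main obstacle is the well-definedness issue hidden in the algorithm's notation: the instruction ``$\hat{h}(u) \leftarrow v$ for all $u \in B(v,\delta)$'' is only unambiguous if no node can be claimed by two distinct facilities of $I$, which is precisely what the disjointness argument secures. Once disjointness is in hand, the remainder is routine bookkeeping about which loop is permitted to modify $\hat{h}(u)$, and no further estimate is needed.
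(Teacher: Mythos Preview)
Your proposal is correct and follows the same approach as the paper: use that \(I\) is an independent set in \(G\) to conclude the balls \(\{B(v,\delta)\}_{v\in I}\) are pairwise disjoint, hence \(B(v,\delta)\subseteq \hat{L}_v\), and then invoke the assumption \(|B(v,\delta)|\ge\gamma^2\ln^2 n\). The paper's proof states this in one sentence; you have simply spelled out the disjointness via the triangle inequality and the bookkeeping about which loop modifies \(\hat{h}(u)\), which is a welcome clarification.
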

\begin{proof}
    Because the set \(I\) has the property that the balls \(B(u,\delta)\) for all \(u \in I\) do not overlap, reconnecting all nodes in \(B(v,\delta)\) to \(v\) together with the assumption that \(|B(v,\delta)| \ge \gamma^2 \ln^2 n\) establishes the lemma.
\end{proof}

We bound the additional costs that occur due to the reconnection and then bound the costs of the additional capacity used to open facilities compared to the optimal assignment.

\begin{lemma}
    With $b_{avg} = \frac{1}{n}\sum_{v \in V} b_v$ the extra reconnection cost are upper bounded,
\[
 \sum_{u \in V} b_u (f_{\hat{h}(u)} + d(u, \hat{h}(u))) - OPT \le 4 \delta n b_{avg}
\]
\label{lem:reconnect_cost}
\end{lemma}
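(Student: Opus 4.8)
The plan is to reduce the claim to a per-location comparison and then take a weighted sum. First recall that, since each opened facility is given capacity equal to the total demand routed to it, the cost of any connection function $g$ can be written node-wise as $\sum_{u \in V} b_u\bigl(f_{g(u)} + d(u,g(u))\bigr)$; in particular $OPT = \sum_{u\in V} b_u\bigl(f_{h(u)} + d(u,h(u))\bigr)$ for the optimal non-private connection $h$ of Algorithm \ref{alg:opt}. Hence it suffices to prove the pointwise inequality $f_{\hat h(u)} + d(u,\hat h(u)) \le f_{h(u)} + d(u,h(u)) + 4\delta$ for every $u \in V$; multiplying by $b_u$, summing, and using $\sum_{u} b_u = n\,b_{avg}$ then yields the lemma.

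Before the case analysis I would record three structural facts. (i) The optimal target $m := h(u)$ is itself a marked node, i.e. $m \in \hat M$ and $f_m \le f_w + d(m,w)$ for all $w$ (self-optimality); this follows from $m = \argmin_w (f_w + d(u,w))$ by a short triangle-inequality argument. (ii) Since $I$ is a maximal independent set of $G$ and the greedy rule adds nodes in increasing order of facility value, there exists $v^\ast \in I$ with $d(m,v^\ast) \le 2\delta$ and $f_{v^\ast} \le f_m$: if $m\in I$ take $v^\ast=m$, otherwise maximality forces a $G$-neighbor $v^\ast\in I$ of $m$ (so $d(m,v^\ast)\le 2\delta$), and that neighbor must have been selected before $m$ was reached, giving $f_{v^\ast}\le f_m$. (iii) As already used in Lemma \ref{lem:sizeLv}, the balls $B(v,\delta)$, $v\in I$, are pairwise disjoint, so each $u$ lies in at most one of them.

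The argument then splits according to how $\hat h$ treats $u$. If $u \in B(v,\delta)$ for some $v\in I$, then $\hat h(u)=v$ and $d(u,v)\le \delta$; applying self-optimality of the marked node $v$ to the point $m$ gives $f_v \le f_m + d(m,v) \le f_m + d(u,m) + \delta$, so $f_v + d(u,v) \le f_m + d(u,m) + 2\delta$. Otherwise $u$ lies in no ball and $\hat h(u) = \argmin_{w\in I}\bigl(f_w + d(u,w)\bigr)$; evaluating this minimum at the candidate $v^\ast \in I$ gives $f_{\hat h(u)} + d(u,\hat h(u)) \le f_{v^\ast} + d(u,v^\ast) \le f_m + \bigl(d(u,m) + 2\delta\bigr)$, where I used $f_{v^\ast}\le f_m$ together with $d(u,v^\ast)\le d(u,m)+d(m,v^\ast)\le d(u,m)+2\delta$. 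In both cases the per-location increase is at most $2\delta \le 4\delta$, which is all that is required.

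The main obstacle is the facility term $f_{\hat h(u)}$: unlike the connection distance, it cannot be bounded by a single triangle inequality, since the facility the algorithm reconnects $u$ to might a priori be expensive. The two structural facts are precisely what control it — self-optimality of marked nodes handles the in-ball case, while the greedy ``lowest facility value first'' selection guarantees the representative $v^\ast$ is no more costly than $m$ in the reconnection case. Once $f_{\hat h(u)}$ is tamed, the distance overhead is the elementary contribution coming from the radius-$\delta$ balls and the $2\delta$ separation threshold defining $G$, and the weighted sum then closes the proof comfortably within the stated bound $4\delta n\,b_{avg}$.
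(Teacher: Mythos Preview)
Your proof is correct and follows the same per-node strategy as the paper: split into the in-ball and not-in-any-ball cases, and in each bound $f_{\hat h(u)} + d(u,\hat h(u))$ against $f_{h(u)} + d(u,h(u))$ via self-optimality of marked nodes together with the existence of a nearby representative in $I$. The only difference is that for the second case you additionally exploit the greedy ordering to get $f_{v^\ast}\le f_m$, which yields a per-node overhead of $2\delta$ rather than the paper's $4\delta$ --- a small sharpening (the paper only uses the ordering later, in Theorem~\ref{theorem:recon_sum_bound}) that is not needed for the stated bound.
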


\begin{proof}
Consider a location \(v \in I\) from the maximal independent set. Since \(v\) is in \(I\), it is opened as a facility in both the optimal solution and in the solution produced by Algorithm \ref{alg:reconn}. We begin by bounding the cost of reconnecting every node in \(B(v, \delta)\) to \(v\).

Take any \(u \in B(v,\delta)\) so \(d(u,v) \le \delta\). Let \(w \in M\) be such that \(h(u) = w\); that is, in the optimal solution \(u\) is connected to \(w\), but in the modified solution \(\hat{h}(u) = v\) (i.e. \(u\) is reconnected from $w$ to \(v\)).

In the optimal solution, the cost associated with \(u\) is 
$b_u\bigl(f_w + d(u,w)\bigr)$,
while in the reconnected solution the cost is 
$b_u\bigl(f_v + d(u,v)\bigr)$.
Because both \(v\) and \(w\) are marked in the optimal assignment, each prefer being connected to itself rather than to the other, which gives us:
\[
f_v < f_w + d(v,w) \quad \text{and} \quad f_w < f_v + d(v,w).
\]
Using these inequalities along with the triangle inequality, we can bound the reconnection cost:
\[
b_u\bigl(f_v+d(u,v)\bigr) < b_u\bigl(f_w+d(v,w)+d(u,v)\bigr) \le OPT_u + 2b_u\delta,
\]
when $OPT_u$ is the cost for $u$ in the optimal solution, i.e. $OPT_u = b_u(f_{h(u)} + d(u,h(u)))$. This shows that reconnecting the node $u$ from $w$ to $v$ increases the costs by at most $2b_u\delta$.

We now bound the cost of reconnecting nodes that are not within a \(\delta\)-distance of any location in \(I\). In Algorithm \(\ref{alg:reconn}\), the optimal assignment for such nodes is made to a node in \(I\). Let \(u \in V\), \(v \in I\), and \(w \in M\) be such that \(h(u) = w\) and \(\hat{h}(u) = v\). Moreover, assume that there is no \(x \in I\) satisfying \(d(x,u) \le \delta\); otherwise, we would have applied the previous case.

As before, we want to bound the cost incurred by \(u\) under the solution \((\hat{h}, \hat{k})\), which is \(b_u\,(f_v + d(v,u))\). Because \(u\) was reconnected to \(v\), it follows that \(w\) was not selected in the maximal independent set. By the properties of such a set, there must exist a neighbor of \(w\) in \(G\) that belongs to \(I\); denote this neighbor by \(v\). Hence, we have
$d(v,w) \le 2\delta$.

Using this, we obtain
\[
b_u\,(f_v + d(v,u)) \le b_u\,(f_w + d(w,u) + 2d(v,w)) \le OPT_u + 4b_u\delta.
\]
This shows that the extra cost for reconnecting node \(u\) is at most \(4\delta b_u\). Summing over all nodes in \(V\) gives an overall additional reconnection cost bounded by
$\sum_{u \in V} 4\delta b_u \le 4\delta\, n\, b_{avg}$.

\end{proof}

In Theorem \ref{theorem:no_reconn_costs} we use the fact that $\sum_{v \in M'} \sqrt{|L'_v|}f_v \le OPT$ for the analysis of the total expected costs of the set of facilities $M'$ opened by Algorithm \ref{alg:no_reconn}. For the set of opened facilities $I$ by Algorithm \ref{alg:reconn} this statement no longer holds. In the following we show that for $I$ an additional additive factor of $\delta n$ has to be introduced.

\begin{theorem} \label{theorem:recon_sum_bound}
    For the set of facilities $I$ opened by the reconnection algorithm, the sum of facility costs with exactly one client present at every location stays bounded:
    \[
    \sum_{v \in I} |\hat{L}_v| f_v \le OPT + \delta n
    \]    
\end{theorem}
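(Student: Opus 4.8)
The plan is to turn the inequality into a per-location charging argument. Because every location carries exactly one client, $|\hat{L}_v| = |\{u \in V : \hat{h}(u) = v\}|$, so the left-hand side rewrites as $\sum_{v \in I} |\hat{L}_v| f_v = \sum_{u \in V} f_{\hat{h}(u)}$. For the right-hand side I would use that the optimal connection rule $h(u) = \argmin_{w \in V}(f_w + d(u,w))$ does not depend on the client counts; hence the optimal solution of the all-ones instance opens exactly the marked set $M \supseteq I$ and has value $OPT = \sum_{u \in V}\bigl(f_{h(u)} + d(u,h(u))\bigr)$. It then suffices to control $f_{\hat{h}(u)}$ location by location against $f_{h(u)} + d(u,h(u))$, discarding the connection term of the new solution that Lemma~\ref{lem:reconnect_cost} keeps.

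The first step is the covered case: if $u \in B(v,\delta)$ for some $v \in I$, then $\hat{h}(u) = v$ and $d(u,v) \le \delta$, and since both $v$ and $w := h(u)$ are marked, the marked-location inequality $f_v \le f_w + d(v,w)$ together with $d(v,w) \le d(u,v) + d(u,w) \le \delta + d(u,w)$ gives $f_{\hat{h}(u)} = f_v \le f_{h(u)} + d(u,h(u)) + \delta$. The second step handles an uncovered $u$ whose optimal facility $w = h(u)$ already lies in $I$: the argmin definition of $\hat{h}(u)$ over $I$ yields $f_{\hat{h}(u)} \le f_{\hat{h}(u)} + d(u,\hat{h}(u)) \le f_w + d(u,w)$, which is even stronger than needed. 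Summing these two easy steps already matches the claimed bound.

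The main obstacle is the remaining case: an uncovered $u$ whose optimal facility $w = h(u)$ is not in $I$. By maximality of the independent set, $w$ has a neighbour $v' \in I$ with $d(w,v') \le 2\delta$, and chaining the argmin property $f_{\hat{h}(u)} + d(u,\hat{h}(u)) \le f_{v'} + d(u,v')$ with the marked inequality $f_{v'} \le f_w + 2\delta$ and $d(u,v') \le d(u,w) + 2\delta$ only gives $f_{\hat{h}(u)} + d(u,\hat{h}(u)) \le f_{h(u)} + d(u,h(u)) + 4\delta$, losing $4\delta$ rather than $\delta$. To recover the stated additive term I would exploit that $u$ is uncovered, so $d(u,\hat{h}(u)) > \delta$, which cancels part of the overshoot; since, however, being forced onto an $I$-facility while $w \notin I$ costs an essentially irreducible $2\delta$ in facility price, the delicate point is whether a pure per-location charge suffices or whether one must instead bound the total number (or aggregate demand) of such mis-routed uncovered locations in order to keep the cumulative additive error at $\delta n$. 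Tightening this bookkeeping is where I expect the real work to lie.
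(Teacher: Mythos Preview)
Your per-location charging strategy is exactly the paper's approach, and your covered case matches it verbatim. The gap is entirely in your third case (uncovered $u$ with $w = h(u) \notin I$), and it is smaller than you think.

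You use only the marked-location inequality $f_{v'} \le f_w + d(v',w) \le f_w + 2\delta$ for the neighbour $v' \in I$ of $w$. But the algorithm description specifies that the greedy maximal independent set selects marked nodes \emph{in ascending order of facility cost}. Hence if $w \in \hat{M}\setminus I$, the node $v' \in I$ that blocked $w$ (with $d(w,v') \le 2\delta$) was picked earlier and therefore satisfies the stronger bound $f_{v'} \le f_w$. Plugging this into your own chain,
\[
f_{\hat{h}(u)} + d(u,\hat{h}(u)) \;\le\; f_{v'} + d(u,v') \;\le\; f_w + d(u,w) + d(w,v') \;\le\; f_w + d(u,w) + 2\delta,
\]
and then applying your own observation that $u$ is uncovered, so $d(u,\hat{h}(u)) > \delta$, gives
\[
f_{\hat{h}(u)} \;\le\; f_w + d(u,w) + 2\delta - d(u,\hat{h}(u)) \;<\; f_{h(u)} + d(u,h(u)) + \delta.
\]
That is exactly the per-location bound you need; summing over $u \in V$ yields $\sum_{v \in I}|\hat{L}_v|f_v \le OPT + \delta n$. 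No global bookkeeping or aggregate-demand argument is required. The ``irreducible $2\delta$ in facility price'' you worried about is not irreducible at all once you use the ordering rule of the greedy MIS.
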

\begin{proof}
    We show the statement by proofing that a location introduces costs of at most its part in OPT plus $\delta$ in the left hand side.
    Let $u \in V$, $v \in I$, $w \in M$ such that $h(u) = w$ and $\hat{h}(u) = v$. The costs of $u$ in the left hand side are $f_v$ while on the right hand side it is at least $f_w + d(w, u)$. If $u$ was not reconnected, so $v = w$, the claim follows.
    For the case of a reconnection of $u$, we show, $f_v \le f_w + d(u, w) + \delta$. There are two cases in which $u$ can be reconnected to $v$. Firstly, $u$ is close to the location $v$: $u \in B(v, \delta)$.
    \[
        f_v \le f_w + d(w, v) \le f_w + d(w, u) + d(u, v) \le f_w + d(w, u) + \delta
    \]
    The first inequality, follows from the optimality before reconnection and the last one from $u \in B(v, \delta)$.
    Secondly, if $u \notin B(v, \delta)$ then $u$ was reconnected to $v$ because it yielded the lowest costs:
    \[
        v = \argmin_{x \in I} f_x + d(x, u) 
    \]
    The maximal independent set algorithm adds locations in ascending order of facility costs. This means if $w \notin I$ one of its neighbors $x \in I$ with $f_x \le f_w$ was chosen instead.
    \[
        f_v \le f_x + d(u,x) - d(u,v) \le f_x + d(u,w) + d(w, x) - d(u,v) \le f_w + d(u,w) + \delta
    \]
    The first equality follows from the optimality after reconnection and the last one follows from $u \notin B(v, \delta)$ and $d(w,x) \le 2\delta$.
    Therefore, in all cases the costs of $u$ on the left hand side are bounded by its part in the optimal solution $OPT_u$ plus $\delta$. For the sum of $n$ locations this yields a bound of $OPT + \delta n$.
\end{proof}

We are now ready to demonstrate the main statement of this paper.

\begin{theorem} \label{theorem:error_main}
    Algorithm \(\ref{alg:reconn}\) has a failure probability of at most \(\alpha\). Moreover, when it succeeds, its expected cost is bounded by  
\[
\left(1 + \frac{2}{\varepsilon}\ln\frac{2n}{\alpha}\frac{1}{\gamma \ln n}\right)OPT + \delta n (4b_{avg} + \frac{2}{\varepsilon}\ln{\frac{2n}{\alpha}\frac{1}{\gamma\ln{n}}}).
\] \label{thm:main}
\end{theorem}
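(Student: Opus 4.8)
The plan is to treat the two claims separately, handling the failure probability exactly as in Theorem \ref{theorem:no_reconn_failure} and then decomposing the expected cost into an ``ideal'' reconnected cost plus the extra cost contributed by the Laplacian margins. The whole argument is a matter of assembling the three pieces already proved—Lemma \ref{lem:reconnect_cost}, Lemma \ref{lem:sizeLv}, and Theorem \ref{theorem:recon_sum_bound}—so the work is in identifying which bound to apply where, not in new calculation.

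For the failure bound, I would observe that the capacity rule $\hat{k}_v = \hat{N}_v + \frac{2}{\varepsilon}\sqrt{|\hat{L}_v|}\ln\frac{2n}{\alpha}$ has the same form as in Algorithm \ref{alg:no_reconn}, so the failure event $E_v$ (true demand $\sum_{u\in\hat{L}_v}b_u$ exceeds $\hat{k}_v$) is again driven entirely by the sum of $|\hat{L}_v|$ independent Laplace noises exceeding the margin. Applying Theorem \ref{theorem:sum} with $k=|\hat{L}_v|$ and $\beta_v=\frac{|\hat{L}_v|}{n}\alpha$ gives $\Pr[E_v]\le\frac{|\hat{L}_v|}{n}\alpha$. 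The one point to check carefully is that after reconnection the sets $\{\hat{L}_v\}_{v\in I}$ still partition $V$, since every location is reassigned to exactly one facility in $I$; hence $\sum_{v\in I}|\hat{L}_v|=n$ and the union bound yields a total failure probability of at most $\alpha$.

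For the cost I would write the expected cost as the connection cost $\sum_{u\in V}b_u\, d(u,\hat{h}(u))$ plus the expected facility cost $\sum_{v\in I}E[\hat{k}_v]\,f_v$. Because the Laplace noise is mean-zero, $E[\hat{N}_v]=\sum_{u\in\hat{L}_v}b_u$, so the expected facility cost splits into an ideal term $\sum_{v\in I}\big(\sum_{u\in\hat{L}_v}b_u\big)f_v$ and the margin term $\frac{2}{\varepsilon}\ln\frac{2n}{\alpha}\sum_{v\in I}\sqrt{|\hat{L}_v|}\,f_v$. Combining the ideal facility term with the connection cost reconstructs exactly the quantity $\sum_{u\in V}b_u\bigl(f_{\hat{h}(u)}+d(u,\hat{h}(u))\bigr)$, which Lemma \ref{lem:reconnect_cost} bounds by $OPT+4\delta n\, b_{avg}$.

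The decisive step is bounding the margin term, and this is where the assumption on ball sizes pays off. Using Lemma \ref{lem:sizeLv}, $|\hat{L}_v|\ge\gamma^2\ln^2 n$, I would replace each $\sqrt{|\hat{L}_v|}$ by $|\hat{L}_v|/(\gamma\ln n)$, uniformly over $v\in I$, turning $\sum_{v\in I}\sqrt{|\hat{L}_v|}\,f_v$ into $\frac{1}{\gamma\ln n}\sum_{v\in I}|\hat{L}_v|\,f_v$—precisely the move that collapses the $O(\log n)$ multiplicative error to a constant. Theorem \ref{theorem:recon_sum_bound} then gives $\sum_{v\in I}|\hat{L}_v|\,f_v\le OPT+\delta n$, so the margin term is at most $\frac{2}{\varepsilon}\ln\frac{2n}{\alpha}\frac{1}{\gamma\ln n}(OPT+\delta n)$. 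Adding this to $OPT+4\delta n\, b_{avg}$ and collecting the coefficients of $OPT$ and of $\delta n$ reproduces the claimed bound verbatim. I expect the only real obstacle to be bookkeeping: making sure the mean-zero cancellation, the partition property of the $\hat{L}_v$, and the uniform validity of the $\sqrt{|\hat{L}_v|}\le|\hat{L}_v|/(\gamma\ln n)$ reduction are each invoked cleanly, after which the final arithmetic is immediate.
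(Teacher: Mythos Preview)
Your proposal is correct and follows essentially the same route as the paper's own proof: the failure bound is handled exactly as in Theorem~\ref{theorem:no_reconn_failure}, and the cost bound is obtained by combining Lemma~\ref{lem:reconnect_cost} for the reconnected cost with the chain $\sum_{v\in I}\sqrt{|\hat{L}_v|}f_v \le \frac{1}{\gamma\ln n}\sum_{v\in I}|\hat{L}_v|f_v \le \frac{1}{\gamma\ln n}(OPT+\delta n)$ from Lemma~\ref{lem:sizeLv} and Theorem~\ref{theorem:recon_sum_bound}. You have simply made explicit the bookkeeping (the partition property of the $\hat{L}_v$, the mean-zero cancellation) that the paper leaves implicit in its reference to ``similar reasoning as in Theorem~\ref{theorem:no_reconn_costs}''.
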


\begin{proof}
The failure probability is established using an argument analogous to that in Theorem \ref{theorem:no_reconn_failure}. Moreover, by Lemma \(\ref{lem:sizeLv}\) and Theorem \ref{theorem:recon_sum_bound} we have
\[
\sum_{v\in I}\sqrt{|\hat{L}_v|}f_v \le \frac{1}{\gamma \ln n}\sum_{v\in I}|\hat{L}_v|f_v \le \frac{1}{\gamma \ln n}\, (OPT + \delta n).
\]
Following a similar reasoning as in Theorem \(\ref{theorem:no_reconn_costs}\), the lemma statement then follows.
\end{proof}
When \(\varepsilon\), \(\alpha\), and \(\gamma\) are constants, the multiplicative factor  
$\left(1 + \frac{2}{\varepsilon}\ln\frac{2n}{\alpha}\frac{1}{\gamma \ln n}\right)$
remains \(O(1)\). Additionally, if the locations are sufficiently dense (i.e., \(\delta\) is small), the additive term \(\delta n (4b_{avg} + \frac{2}{\varepsilon}\ln{\frac{2n}{\alpha}\frac{1}{\gamma\ln{n}}})\) becomes negligible compared to \(OPT\).
While we assume \( b_v \geq 1 \) for all \( v \in V \) in this section, we present an analysis for the case where \( b_v \geq 0 \) in the following one.

\section{Additional Theoretical Results for $b_v \geq 0$} \label{chapter:other_results}
In this section we provide an analysis of our $\varepsilon$-LDP algorithm \ref{alg:reconn} that does not require the presence of at least one client at every location.
Until now, we assumed that every location hosts at least one client (\(b_v \ge 1\) for all \(v\in V\)). This assumption was essential for establishing \(\sum_{v\in M'}\sqrt{|L'_v|}f_v \leq OPT\) in Section 4 and \(\sum_{v\in I}\sqrt{|\hat{L}_v|}f_v \leq (OPT+ \delta n)/(\gamma \ln n)\) in Section 5. Without this assumption, scenarios with many facilities but only one client per location would lead to a poor approximation, since the optimum ($OPT$) depends on the number of clients, while the costs of the private algorithm depend on the number of locations.

In this section, we relax the assumption that every location has at least one client. In particular, we now allow \(b_v \in \mathbb{N}_{\ge 0}\). 
We introduce two different approaches. The first one assumes a total ratio $\nu$ between clients and locations, while the second approach assumes that a location has client presence with a probability of at least $p$.

\subsection{Client-Location Ratio Analysis}
In the following we define \(\nu = \frac{N}{n}\) as the ratio between the total number of clients $N$ and the number of locations $n$.
Moreover, we assume that the ratio \(\eta = \frac{f_{max}}{f_{min}}\) is constant.
As discussed previously, we need a revised bound on \(\sum_{v \in I}\sqrt{|\hat{L}_v|}f_v\). We observe that
\[
\sum_{v \in I}\sqrt{|\hat{L}_v|}f_v \le \frac{f_{max}}{\gamma \ln n}\sum_{v \in I}|\hat{L}_v| = \frac{f_{max}}{\gamma \ln n}\frac{N}{\nu} \le \frac{1}{\gamma \ln n} \frac{\eta}{\nu}\, OPT.
\]
Using this inequality, we follow the arguments in the proof of Theorem \ref{thm:main} to derive an upper bound on the total expected cost:
$\left(1 + \frac{2}{\varepsilon}\ln\frac{2n}{\alpha}\frac{1}{\gamma\ln{n}}\frac{\eta}{\nu}\right)OPT + 4\delta n b_{avg}$.
When $\varepsilon$, $\alpha$, $\gamma$, $\eta$, and $\nu$ are constants, we obtain that the multiplicative factor $\left(1 + \frac{2}{\varepsilon}\ln\frac{2n}{\alpha}\frac{1}{\gamma\ln{n}}\frac{\eta}{\nu}\right)$ remains $O(1)$.

\subsection{Bernoulli Distributed Presence Analysis}
In this section we assume the probability of the presence of at least one client at a location is lower bounded. 
\paragraph{Client distribution assumption}
Let $v \in V$, we assume that with constant probability $p$ at least one client is present at $v$:
\[Pr[b_v \ge 1] \ge p\]

In this scenario we propose two different analysis for our reconnection algorithm. The first one assumes $|\hat{L}_v| \ge \gamma^2\ln^2{n}$ and $p > \frac{1}{\gamma}$ while the second one requires a stronger assumption about the distribution of locations with $|\hat{L}_v| \ge \gamma^2\ln^3{n}$ but drops the $p > \frac{1}{\gamma}$ requirement.
For the analysis we remind the reader of Hoeffding's inequality.
\begin{theorem}[Hoeffding's Inequality \cite{hoeffding1963probability}]\label{theorem:hoeff}
    Suppose $X_1, ...,  X_n$ are independent random variables taking values in $[a, b]$. Let $X = \sum_{i=1}^{n}X_i$ denote the sum and let $\mu = E[X]$ denote the expected value of the sum. Then, for $t > 0$,
    \begin{equation}
        Pr[X \le \mu - t] < exp(-2t^2/n(b-a)^2)
    \end{equation}
\end{theorem}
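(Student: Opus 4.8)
The plan is to prove the lower-tail bound by the standard Chernoff (moment generating function) method. First I would fix a parameter $s > 0$ and exploit monotonicity of the exponential: since $\{X \le \mu - t\} = \{s(\mu - X) \ge st\}$, Markov's inequality applied to the nonnegative variable $e^{s(\mu - X)}$ yields
\[
\Pr[X \le \mu - t] \le e^{-st}\, E\!\left[e^{s(\mu - X)}\right].
\]
Because the $X_i$ are independent, the expectation factorizes: writing $Y_i = E[X_i] - X_i$, each $Y_i$ is a centered variable (i.e. $E[Y_i] = 0$) taking values in an interval of width $b - a$, and
\[
E\!\left[e^{s(\mu - X)}\right] = \prod_{i=1}^{n} E\!\left[e^{s Y_i}\right].
\]
Everything thus reduces to a uniform bound on the moment generating function of a single bounded, mean-zero random variable.

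The key step, and the part I expect to be the main obstacle, is Hoeffding's lemma: for any $Y$ with $E[Y] = 0$ and $Y \in [a', b']$ almost surely, one has $E[e^{sY}] \le \exp\!\big(s^2 (b'-a')^2 / 8\big)$ for all $s$. I would establish this by convexity: on $[a', b']$ the map $y \mapsto e^{sy}$ lies below its chord, which bounds $E[e^{sY}]$ by an explicit function of $s$; taking logarithms produces a function $\psi(s)$ with $\psi(0) = \psi'(0) = 0$, and a short computation shows $\psi''(s) \le (b'-a')^2/4$ uniformly in $s$. A second-order Taylor expansion then gives $\psi(s) \le s^2(b'-a')^2/8$. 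Pinning down the constant $1/8$ is the delicate point; it rests on the fact that the variance-like quantity appearing in $\psi''$ has the form $h(1-h)(b'-a')^2$ with $h \in [0,1]$, so it is at most a quarter of the squared interval width. Applying the lemma to each $Y_i$, whose range has width exactly $b - a$, gives $E[e^{s(\mu - X)}] \le \exp\!\big(n s^2 (b - a)^2 / 8\big)$.

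Finally I would optimize over the free parameter. Combining the two bounds yields
\[
\Pr[X \le \mu - t] \le \exp\!\left(-st + \frac{n s^2 (b-a)^2}{8}\right),
\]
and minimizing the exponent over $s > 0$ at $s = 4t / \big(n (b-a)^2\big)$ produces the claimed bound $\exp\!\big(-2t^2 / (n(b-a)^2)\big)$; the strict inequality follows from the strictness of the Markov step for $t > 0$. I would remark that the argument is unchanged if the $X_i$ are supported on distinct intervals $[a_i, b_i]$, with $n(b-a)^2$ replaced by $\sum_{i} (b_i - a_i)^2$, so the symmetric form stated here is simply the special case invoked in the subsequent analysis.
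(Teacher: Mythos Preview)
Your argument is correct and is the standard proof of Hoeffding's inequality via the Chernoff method and Hoeffding's lemma. Note, however, that the paper does not actually prove this statement: Theorem~\ref{theorem:hoeff} is stated as a classical result with a citation to Hoeffding's original 1963 paper and is used as a black box in the subsequent analysis, so there is no ``paper's own proof'' to compare against. Your write-up would serve perfectly well as a self-contained justification were one desired.
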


We now establish the relationship between the number of connected clients $N_v = \sum_{u \in \hat{L}_v}b_u$ and number of connected locations $|\hat{L}_v|$ for a facility $v \in I$ for the setting of $|\hat{L}_v| \ge \gamma^2 \ln^2{n}$. 
\begin{theorem}\label{theorem:client_pres}
    For a facility $v \in I$ opened by Algorithm \ref{alg:reconn}, $p > 1/\gamma$, and $n$ sufficiently large:
    \[
        Pr[\sqrt{L_v}\le \frac{1}{\ln{n}} N_v] \ge 1- 1/n^2
    \]
\end{theorem}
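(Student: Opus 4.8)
The plan is to bound $N_v$ from below by a sum of independent indicator variables and then apply Hoeffding's inequality (Theorem \ref{theorem:hoeff}). First I would record a structural observation: the set $\hat{L}_v$, and hence its size $m := |\hat{L}_v|$, is determined entirely by the public data, since the connection function $\hat{h}$, the maximal independent set $I$, the ball reconnections, and the reconnection of uncovered nodes in Algorithm \ref{alg:reconn} never inspect the private counts $b_u$. Consequently $m$ is a fixed quantity satisfying $m \ge \gamma^2 \ln^2 n$ by the standing assumption, and the variables $\{b_u\}_{u \in \hat{L}_v}$ remain independent (relying on the natural reading of the Bernoulli-presence model, that client presence is independent across locations). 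Since each $b_u$ is a non-negative integer, $b_u \ge \mathbf{1}[b_u \ge 1]$, so setting $X_u := \mathbf{1}[b_u \ge 1]$ and $X := \sum_{u \in \hat{L}_v} X_u$ gives $N_v \ge X$, where each $X_u \in [0,1]$ and $E[X_u] = \Pr[b_u \ge 1] \ge p$, hence $\mu := E[X] \ge pm$.

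Next I would rewrite the target event. The inequality $\sqrt{|\hat{L}_v|} \le \frac{1}{\ln n} N_v$ is equivalent to $N_v \ge \sqrt{m}\,\ln n$, so because $N_v \ge X$ it suffices to show $\Pr[X < \sqrt{m}\,\ln n] \le 1/n^2$. The key step, and the only place where the hypothesis $p > 1/\gamma$ is used, is to verify that the threshold $\sqrt{m}\,\ln n$ lies strictly below the mean $\mu$ with a constant-fraction gap. Indeed, $m \ge \gamma^2 \ln^2 n$ yields $\sqrt{m} \ge \gamma \ln n$, hence $\sqrt{m}\,\ln n \le \sqrt{m}\cdot \sqrt{m}/\gamma = m/\gamma$. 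Setting $t := \mu - \sqrt{m}\,\ln n$, this gives $t \ge pm - m/\gamma = (p - 1/\gamma)\,m > 0$.

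Finally I would invoke Hoeffding with $a = 0$, $b = 1$ (so $(b-a)^2 = 1$) and $m$ summands:
\[
\Pr[X \le \mu - t] < \exp\!\left(-\frac{2t^2}{m}\right) \le \exp\!\left(-2\left(p - 1/\gamma\right)^2 m\right) \le \exp\!\left(-2\left(p - 1/\gamma\right)^2 \gamma^2 \ln^2 n\right),
\]
where the last two bounds use $t \ge (p - 1/\gamma)m$ and $m \ge \gamma^2 \ln^2 n$ respectively. Since $p$ and $\gamma$ are constants with $p > 1/\gamma$, the exponent satisfies $2(p - 1/\gamma)^2 \gamma^2 \ln^2 n \ge 2\ln n$ once $\ln n \ge 1/\left((p - 1/\gamma)^2 \gamma^2\right)$, that is, for $n$ sufficiently large, which drives the probability below $1/n^2$. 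Combining $\Pr[N_v < \sqrt{m}\,\ln n] \le \Pr[X \le \sqrt{m}\,\ln n]$ with this estimate yields the claim.

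I do not anticipate a serious obstacle: the argument is a single one-sided Hoeffding bound. The two points requiring genuine care are confirming that $\hat{L}_v$ is independent of the private data (so the $b_u$ inside the sum are truly independent and $m$ is deterministic), and establishing that the threshold sits below the mean with a linear-in-$m$ gap — the latter is precisely where the combination of $p > 1/\gamma$ and $m \ge \gamma^2 \ln^2 n$ is essential.
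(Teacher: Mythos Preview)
Your proposal is correct and follows essentially the same approach as the paper: bound $N_v$ below by the sum of indicators $X_u = \mathbf{1}[b_u \ge 1]$, verify via $p > 1/\gamma$ and $|\hat{L}_v| \ge \gamma^2\ln^2 n$ that the threshold $\sqrt{|\hat{L}_v|}\,\ln n$ lies a constant fraction below the mean, and apply Hoeffding's inequality to drive the failure probability below $1/n^2$ for large $n$. If anything, your write-up is slightly more careful than the paper's --- you explicitly justify why $\hat{L}_v$ is determined by public data alone (so the $b_u$'s remain independent inside the sum) and you work with $\mu \ge pm$ rather than $\mu = pm$, which is what the assumption $\Pr[b_v \ge 1] \ge p$ actually gives.
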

\begin{proof}
    Let $v \in I$ be a facility opened by the reconnection algorithm. We introduce the indicator variable $X_u = 1\{b_u \ge 1\}$ for a connected location $u \in \hat{L}_v$ to indicate the presence of at least one client.
    \[
        Pr[\sqrt{|\hat{L}_v|} \le \frac{1}{\ln{n}}N_v] \ge Pr[\sqrt{|\hat{L}_v|} \le \frac{1}{\ln{n}}\sum_{u \in \hat{L}_v}X_u]
    \]
    With our assumption about the client distribution we get $X_u \sim Ber(p)$ for all $u \in \hat{L}_v$ and $E[\sum_{u \in \hat{L}_v} X_u] = |\hat{L}_v|p$.
    For our bound we use Hoeffding's inequality from Theorem \ref{theorem:hoeff} to bound the failure probability $Pr[\sum_{u \in \hat{L}_v}X_u \le \ln{n}\sqrt{|\hat{L}_v|}]$.
    We rewrite this probability to fit the framework of Hoeffding's inequality. We need to find $t > 0$ such that $p|\hat{L}_v| - t = \ln{n}\sqrt{|\hat{L}_v|}$. This yields
    \[
        t = p|\hat{L}_v| - \ln{n} \sqrt{|\hat{L}_v|}
    \]
    Since $p > \frac{1}{\gamma}$ is constant and $|\hat{L}_v| \ge \gamma^2\ln^2{n}$, $t > 0$ is satisfied.
    Hence, we can apply Hoeffding's inequality with $X_u \in [0,1]$ for all $u \in \hat{L}_v$.
    \[
        Pr[\sum_{u \in \hat{L}_v}X_u \le p|\hat{L}_v| - t] \le exp(-2|\hat{L}_v|(p - \frac{\ln{n}}{\sqrt{|\hat{L}_v|}})^2) 
    \]
    With $p > \frac{1}{\gamma}$, $(p - \frac{\ln{n}}{\sqrt{|\hat{L}_v|}})^2$ is lower bounded by the constant $(p-\gamma)^2$ for all $n$.
    \[
        exp(-2|\hat{L}_v|(p - \frac{\ln{n}}{\sqrt{|\hat{L}_v|}})^2) \le exp(-2|\hat{L}_v|(p-\gamma)^2) \le \frac{1}{n^2}
    \]
    The second inequality holds for sufficiently large $n$ and therefore concludes this proof.
\end{proof}
We now apply Theorem \ref{theorem:client_pres} to establish a bound for all facilities $v \in I$.
\begin{theorem}\label{theorem:total_cl_dis}
    With probability $1-\frac{1}{n}$, $p > \frac{1}{\gamma}$ and for sufficiently large n, for all $v \in I$
    \[
        \sqrt{|\hat{L}_v|}\le \frac{1}{\ln{n}} N_v
    \]
\end{theorem}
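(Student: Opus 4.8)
**The plan is to apply Theorem \ref{theorem:client_pres} to each facility individually and then take a union bound over all facilities in $I$.** Theorem \ref{theorem:client_pres} already establishes that for a single facility $v \in I$, the event $\sqrt{|\hat{L}_v|} \le \frac{1}{\ln n} N_v$ fails with probability at most $1/n^2$. The goal of the present theorem is to upgrade this per-facility guarantee into a simultaneous guarantee that holds for \emph{all} $v \in I$ at once, with the failure probability degrading only from $1/n^2$ to $1/n$.

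First I would let $F_v$ denote the failure event $\{\sqrt{|\hat{L}_v|} > \frac{1}{\ln n} N_v\}$ for each $v \in I$, and recall from Theorem \ref{theorem:client_pres} that $\Pr[F_v] \le 1/n^2$ for every $v \in I$, provided $p > 1/\gamma$ and $n$ is sufficiently large. Next I would observe that since $I \subseteq V$ and $|V| = n$, we have $|I| \le n$. Applying the union bound over the at most $n$ facilities in $I$ gives
\[
\Pr\!\left[\bigcup_{v \in I} F_v\right] \le \sum_{v \in I} \Pr[F_v] \le |I| \cdot \frac{1}{n^2} \le n \cdot \frac{1}{n^2} = \frac{1}{n}.
\]
Taking the complement yields $\Pr\!\left[\bigcap_{v \in I}\{\sqrt{|\hat{L}_v|} \le \frac{1}{\ln n} N_v\}\right] \ge 1 - \frac{1}{n}$, which is exactly the claimed statement.

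**The one subtlety I would be careful about is the independence and conditioning structure**, since the set $I$ and the partition $\{\hat{L}_v\}$ are themselves determined by the algorithm (in particular by the public data and the noisy counts), while the events $F_v$ depend on the true client presences $b_u$. The cleanest way to handle this is to note that the reconnection structure—namely $I$ and each $\hat{L}_v$—is fixed once the connection function $\hat{h}$ is determined, and the bound in Theorem \ref{theorem:client_pres} is already phrased conditionally on $v \in I$ with $|\hat{L}_v| \ge \gamma^2 \ln^2 n$ guaranteed by Lemma \ref{lem:sizeLv}. Since the union bound requires no independence between the $F_v$, I would not need to argue anything about the joint distribution of the presence indicators across different facilities; the only ingredient is the per-facility tail bound together with $|I| \le n$. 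This is therefore a short argument, and I do not anticipate a genuine obstacle—the main thing to state explicitly is the bound $|I| \le n$ so that the $n \cdot (1/n^2)$ arithmetic is justified.
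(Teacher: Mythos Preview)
Your proposal is correct and matches the paper's own proof essentially verbatim: the paper's argument is simply ``Follows from Theorem \ref{theorem:client_pres} and the union bound for at most $n$ facilities opened by Algorithm \ref{alg:reconn}.'' Your extra paragraph about the conditioning structure is unnecessary (as you yourself note, the union bound needs no independence), but it does no harm.
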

\begin{proof}
    Follows from Theorem \ref{theorem:client_pres} and the union bound for at most $n$ facilities opened by Algorithm \ref{alg:reconn}.
\end{proof}

With the relationship between number of connected locations and number of connected client we just established, we can bound the expected costs of the Algorithm \ref{alg:reconn} in the setting of Bernoulli distributed client presence.

\begin{theorem} \label{theorem:err_bern}
    Algorithm \ref{alg:reconn} has a failure probability of at most $\alpha$. When it succeeds, with probability of at least $1-\frac{1}{n}$ for large enough $n$ the expected costs are bounded by $ (1 + \frac{2}{\varepsilon}\ln{\frac{2n}{\alpha}}\frac{1}{\ln{n}})OPT + 2n\delta(\frac{2}{\varepsilon}\ln{\frac{2n}{\alpha}} +  b_{avg})$.
\end{theorem}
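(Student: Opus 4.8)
The plan is to follow the template of Theorems \ref{theorem:no_reconn_costs} and \ref{theorem:error_main}, splitting the claim into a failure-probability bound and a cost bound. The failure-probability part carries over essentially verbatim: the margin \(\frac{2}{\varepsilon}\sqrt{|\hat L_v|}\ln\frac{2n}{\alpha}\) added at each \(v\in I\) is identical to the one used in Algorithm \ref{alg:no_reconn}, so the same computation as in Theorem \ref{theorem:no_reconn_failure} — applying Theorem \ref{theorem:sum} with \(k=|\hat L_v|\) and \(\beta_v=\tfrac{|\hat L_v|}{n}\alpha\), then a union bound over \(I\) — shows the total failure probability is at most \(\alpha\). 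For the cost, I would decompose the expected cost of Algorithm \ref{alg:reconn}, conditioned on success, into three pieces: the connection cost \(\sum_{u\in V}b_u d(u,\hat h(u))\), the base facility cost \(\sum_{v\in I}N_v f_v\) (using \(E[\hat N_v]=N_v\), since the Laplace noise is mean zero), and the margin contribution \(\frac{2}{\varepsilon}\ln\frac{2n}{\alpha}\sum_{v\in I}\sqrt{|\hat L_v|}f_v\).

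First I would bound the sum of the connection cost and the base facility cost. Since \(\sum_{u\in V}b_u\bigl(d(u,\hat h(u))+f_{\hat h(u)}\bigr)\) is exactly the reconnected objective, Lemma \ref{lem:reconnect_cost} bounds it by \(OPT\) plus an additive reconnection overhead proportional to \(\delta n\, b_{avg}\). Crucially, this lemma never invokes \(b_v\ge 1\): each node's extra cost is bounded by \(4b_u\delta\), which is vacuous when \(b_u=0\), so the lemma transfers unchanged to the \(b_v\ge 0\) regime. This accounts for the leading \(OPT\) term and the \(b_{avg}\)-dependent part of the additive error.

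The main obstacle is the margin term, and this is where the new assumption is needed. In Theorem \ref{theorem:error_main} this term was controlled deterministically via Lemma \ref{lem:sizeLv} and Theorem \ref{theorem:recon_sum_bound}, both of which rely on every location contributing at least one client so that \(|\hat L_v|\) (locations) and \(N_v\) (clients) can be identified. Once locations may be empty these quantities decouple, so I would instead invoke Theorem \ref{theorem:total_cl_dis}: with probability at least \(1-\tfrac1n\) and for \(n\) large enough, \(\sqrt{|\hat L_v|}\le \tfrac{1}{\ln n}N_v\) holds simultaneously for all \(v\in I\). The Hoeffding tail together with the union bound over the at most \(n\) opened facilities is precisely what forces both the \(1-\tfrac1n\) probability and the requirement that \(n\) be sufficiently large. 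On this event \(\sum_{v\in I}\sqrt{|\hat L_v|}f_v\le \tfrac{1}{\ln n}\sum_{v\in I}N_v f_v\), and \(\sum_{v\in I}N_v f_v=\sum_{u\in V}b_u f_{\hat h(u)}\) is again bounded by \(OPT\) plus an \(O(\delta n\, b_{avg})\) reconnection overhead. Multiplying by \(\frac{2}{\varepsilon}\ln\frac{2n}{\alpha}\) yields the multiplicative factor \(\bigl(1+\frac{2}{\varepsilon}\ln\frac{2n}{\alpha}\frac{1}{\ln n}\bigr)OPT\) together with a second additive \(\delta n\) contribution scaling with \(\frac{2}{\varepsilon}\ln\frac{2n}{\alpha}\). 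Collecting the three pieces and bounding the constants gives the claimed bound \(\bigl(1+\frac{2}{\varepsilon}\ln\frac{2n}{\alpha}\frac{1}{\ln n}\bigr)OPT+2n\delta\bigl(\frac{2}{\varepsilon}\ln\frac{2n}{\alpha}+b_{avg}\bigr)\); the only remaining care is in merging the two additive \(\delta n\) terms into the single stated expression.
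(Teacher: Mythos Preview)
Your proposal is correct and follows essentially the same route as the paper: the failure bound via Theorem~\ref{theorem:no_reconn_failure}, the reconnection overhead via Lemma~\ref{lem:reconnect_cost} (which you rightly note does not use $b_v\ge 1$), and the control of the margin term by invoking Theorem~\ref{theorem:total_cl_dis} to replace $\sqrt{|\hat L_v|}$ by $\tfrac{1}{\ln n}N_v$ and then bounding $\sum_{v\in I}N_v f_v=\sum_u b_u f_{\hat h(u)}$ by $OPT$ plus an $O(\delta n\,b_{avg})$ term. The paper's proof is terser but structurally identical; note that the additive term the paper actually derives in its proof, $n\delta b_{avg}\bigl(\tfrac{2}{\varepsilon}\ln\tfrac{2n}{\alpha}\tfrac{1}{\ln n}+4\bigr)$, already differs from the one printed in the theorem statement, so your caution about ``merging the two additive $\delta n$ terms'' is warranted and not a defect of your argument.
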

\begin{proof}
    The failure probability follows from the same argument as in Theorem \ref{theorem:no_reconn_failure} since it does not make assumptions about the client distribution.
    The reconnection costs are still bounded by $4 \delta nb_{avg}$ as shown in Lemma \ref{lem:reconnect_cost}.
    We now bound the costs of the additional margin: $\frac{2}{\varepsilon}\ln{\frac{2n}{\alpha}\sum_{v \in I}\sqrt{|\hat{L}_v|}f_v}$.
    From Theorem \ref{theorem:total_cl_dis} it follows that with probability $1-\frac{1}{n}$,
    \[
     \sum_{v \in I}\sqrt{|\hat{L}_v|}f_v \le \sum_{v \in I}\frac{1}{\ln{n}}N_vf_v \le \frac{1}{\ln{n}}(OPT + n\delta b_{avg})
    \]
    This results in the bound of total expected costs of 
    \[
        (1 + \frac{2}{\varepsilon}\ln{\frac{2n}{\alpha}}\frac{1}{\ln{n}})OPT + n\delta b_{avg}(\frac{2}{\varepsilon}\ln{\frac{2n}{\alpha}}\frac{1}{\ln{n}} + 4)
    \]
\end{proof}

In the following we provide an analysis such that Algorithm \ref{alg:reconn} achieves the same multiplicative error bound as in Theorem \ref{theorem:error_main} of $(1 + \frac{2}{\epsilon}\ln{\frac{2n}{\alpha}\frac{1}{\gamma \ln{n}}})$ in the setting of Bernoulli distributed clients under a stronger version of Equation \ref{eq:local}. 
\begin{theorem}\label{theorem:fail_bern}
    For a facility $v \in I$ opened by Algorithm \ref{alg:reconn}, $|\hat{L}_v| \ge \gamma^2\ln^3{n}$, $p$ constant and for $n$ large enough.
    \[
        Pr[\sqrt{|\hat{L}_v|} \le \frac{1}{\gamma\ln{n}}N_v] \ge 1-\frac{1}{n^2}
    \]
\end{theorem}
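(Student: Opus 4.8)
The plan is to mirror the proof of Theorem~\ref{theorem:client_pres} but exploit the stronger density assumption $|\hat{L}_v| \ge \gamma^2\ln^3 n$ to remove the requirement $p > 1/\gamma$. As before, for each connected location $u \in \hat{L}_v$ I would introduce the indicator $X_u = \mathbf{1}\{b_u \ge 1\}$, which is Bernoulli with success probability at least $p$; since $N_v = \sum_{u\in\hat{L}_v} b_u \ge \sum_{u\in\hat{L}_v} X_u$, it suffices to bound the failure event $\sum_{u\in\hat{L}_v}X_u < \gamma\ln n\,\sqrt{|\hat{L}_v|}$. Because enlarging the success probabilities only decreases this lower-tail probability, I may take $X_u\sim\mathrm{Ber}(p)$, so that $\mu = E[\sum_u X_u] = p|\hat{L}_v|$.

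Next I would set $t = p|\hat{L}_v| - \gamma\ln n\,\sqrt{|\hat{L}_v|}$ so that $\mu - t = \gamma\ln n\,\sqrt{|\hat{L}_v|}$ is exactly the threshold. Here one must first check $t > 0$, which is equivalent to $p\sqrt{|\hat{L}_v|} > \gamma\ln n$; this holds for $n$ large since $\sqrt{|\hat{L}_v|}\ge\gamma\ln^{3/2}n$ gives $p\sqrt{|\hat{L}_v|}\ge p\gamma\ln^{3/2}n$ and $p\ln^{1/2}n\to\infty$. Applying Hoeffding's inequality (Theorem~\ref{theorem:hoeff}) with $X_u\in[0,1]$, so that $b - a = 1$ and the variable count is $|\hat{L}_v|$, yields
\[
\Pr\!\left[\sum_{u\in\hat{L}_v}X_u \le \mu - t\right] \le \exp\!\left(-\frac{2t^2}{|\hat{L}_v|}\right) = \exp\!\left(-2\bigl(p\sqrt{|\hat{L}_v|} - \gamma\ln n\bigr)^2\right),
\]
using the factorization $t = \sqrt{|\hat{L}_v|}\bigl(p\sqrt{|\hat{L}_v|}-\gamma\ln n\bigr)$.

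The crux is the final estimate of the exponent, and this is where the $\ln^3 n$ assumption does the work previously done by $p > 1/\gamma$. From $\sqrt{|\hat{L}_v|}\ge\gamma\ln^{3/2}n$ I would write $p\sqrt{|\hat{L}_v|}\ge \gamma\ln n\cdot\bigl(p\ln^{1/2}n\bigr)$; for $n$ large enough that $p\ln^{1/2}n\ge 2$, this gives $p\sqrt{|\hat{L}_v|}\ge 2\gamma\ln n$ and hence $p\sqrt{|\hat{L}_v|}-\gamma\ln n\ge\gamma\ln n$. Therefore the failure probability is at most $\exp(-2\gamma^2\ln^2 n)$, which is below $1/n^2$ once $\gamma^2\ln n\ge 1$, i.e. for all large $n$ (recall $\gamma\ge 1$). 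Taking complements gives the claim. The main obstacle is not any single inequality but the asymptotic bookkeeping: unlike Theorem~\ref{theorem:client_pres}, where $\sqrt{|\hat{L}_v|}$ and $\gamma\ln n$ are of the same order and one needs the constant margin $p\gamma > 1$, here the diverging factor $\ln^{1/2}n$ supplies arbitrarily large slack regardless of how small $p$ is, so I must be careful to state every ``for $n$ large enough'' threshold explicitly in terms of $p$ and $\gamma$ rather than absorbing it silently into constants.
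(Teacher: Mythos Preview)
Your proposal is correct and follows essentially the same approach as the paper: introduce the Bernoulli indicators, set $t = p|\hat{L}_v| - \gamma\ln n\,\sqrt{|\hat{L}_v|}$, verify $t>0$ via the $\gamma^2\ln^3 n$ assumption, and apply Hoeffding to obtain $\exp\bigl(-2(p\sqrt{|\hat{L}_v|}-\gamma\ln n)^2\bigr)$, which is at most $1/n^2$ for large $n$. Your version is in fact more careful than the paper's, which simply asserts the final bound ``for large enough $n$'' without spelling out the threshold $p\sqrt{\ln n}\ge 2$ or the monotone coupling that justifies replacing $\Pr[b_u\ge 1]\ge p$ by exact $\mathrm{Ber}(p)$ variables.
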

\begin{proof}
    The proof follows from the same arguments used in Theorem \ref{theorem:client_pres}. We choose $t = p|\hat{L}_v| - \gamma \ln{n} \sqrt{|\hat{L}_v|}$. From $|\hat{L}_v| \ge \gamma^2\ln^3{n}$ and $p$ constant, $t > 0$ is satisfied. We apply Hoeffding's inequality to get
    \[
        Pr[\sum_{u \in \hat{L}_v} X_u \le p|\hat{L}_v| - t] \le exp(-2|\hat{L}_v|(p - \frac{\gamma \ln{n}}{\sqrt{|\hat{L}_v|}})^2)
    \]
    For large enough n, this probability is upper bounded by $1/n^2$.
\end{proof}
We can now proof the constant approximation ratio in this setting.
\begin{theorem}
    Under the assumption of $|B(v, \delta)| \ge \gamma^2 \ln^3{n}$ for every $v \in V$, Algorithm \ref{alg:reconn} has a failure rate of $\alpha$. If the algorithms succeeds, with probability of at least $1 - \frac{1}{n}$ for large enough $n$ the expected costs are bounded by $(1 + \frac{2}{\varepsilon}\ln{\frac{2n}{\alpha}\ln{\frac{1}{\gamma \ln n}}})OPT + n\delta b_{avg}(\frac{2}{\varepsilon}\ln{\frac{2n}{\alpha}}\frac{1}{\gamma \ln{n}} + 4)$
\end{theorem}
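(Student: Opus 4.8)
The plan is to follow the proof of Theorem~\ref{theorem:err_bern} almost verbatim, substituting the sharper per-facility concentration estimate of Theorem~\ref{theorem:fail_bern} for that of Theorem~\ref{theorem:total_cl_dis}; this substitution is exactly what the strengthened density hypothesis buys. First I would dispose of the failure probability: the argument of Theorem~\ref{theorem:no_reconn_failure} bounds the probability that some opened facility is assigned more true clients than its capacity, and that argument uses only the concentration of sums of Laplace noise (Theorem~\ref{theorem:sum}) together with the chosen margin. It makes no reference to the client distribution, so the failure rate remains at most $\alpha$ word for word.

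Next I would upgrade Theorem~\ref{theorem:fail_bern} from a single facility to all of $I$ by a union bound, mirroring Theorem~\ref{theorem:total_cl_dis}. Under the new assumption $|B(v,\delta)| \ge \gamma^2 \ln^3 n$, the non-overlap of the balls $B(u,\delta)$ for $u \in I$ (as in Lemma~\ref{lem:sizeLv}) yields $|\hat{L}_v| \ge \gamma^2 \ln^3 n$ for every $v \in I$, so the hypothesis of Theorem~\ref{theorem:fail_bern} holds at each facility. Since at most $n$ facilities are opened and each satisfies $\sqrt{|\hat{L}_v|} \le \tfrac{1}{\gamma \ln n} N_v$ with probability at least $1 - 1/n^2$, the union bound gives that, with probability at least $1 - 1/n$,
\[
\sqrt{|\hat{L}_v|} \le \frac{1}{\gamma \ln n}\, N_v \quad \text{simultaneously for all } v \in I.
\]

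Conditioned on this event I would bound the expected facility cost. Because the Laplace noise is zero-mean, $E[\hat{N}_v] = N_v$, so the expected facility cost splits as $\sum_{v \in I} N_v f_v + \tfrac{2}{\varepsilon}\ln\tfrac{2n}{\alpha}\sum_{v \in I}\sqrt{|\hat{L}_v|}f_v$. The displayed event controls the margin term via $\sum_{v \in I}\sqrt{|\hat{L}_v|}f_v \le \tfrac{1}{\gamma \ln n}\sum_{v \in I} N_v f_v$, so it remains to bound $\sum_{v \in I} N_v f_v = \sum_{u \in V} b_u f_{\hat{h}(u)}$. Re-running the per-node inequality $f_{\hat{h}(u)} \le f_{h(u)} + d(u,h(u)) + \delta$ from Theorem~\ref{theorem:recon_sum_bound}, now weighted by $b_u$ and summed over all $u \in V$, and using that linear facility costs make $\sum_{u} b_u\bigl(f_{h(u)} + d(u,h(u))\bigr) = OPT$ together with $\sum_u b_u = n\, b_{avg}$, I obtain $\sum_{v \in I} N_v f_v \le OPT + n\delta\, b_{avg}$.

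Finally I would assemble the pieces as in Theorem~\ref{theorem:err_bern}: the reconnected connection-plus-facility cost $\sum_u b_u\bigl(f_{\hat{h}(u)} + d(u,\hat{h}(u))\bigr)$ is at most $OPT + 4\delta n\, b_{avg}$ by Lemma~\ref{lem:reconnect_cost}, while the margin cost is at most $\tfrac{2}{\varepsilon}\ln\tfrac{2n}{\alpha}\tfrac{1}{\gamma \ln n}(OPT + n\delta\, b_{avg})$. Collecting the $OPT$ and $n\delta\, b_{avg}$ terms yields the claimed bound. The only step demanding genuine care is the union bound: I must confirm that the strengthened hypothesis $|\hat{L}_v| \ge \gamma^2 \ln^3 n$ keeps the slack $t = p|\hat{L}_v| - \gamma \ln n \sqrt{|\hat{L}_v|}$ positive for \emph{every} constant $p$ — equivalently $\sqrt{|\hat{L}_v|} > \gamma \ln n / p$, which the extra $\ln n$ factor guarantees for large $n$. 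This is precisely what removes the $p > 1/\gamma$ restriction of Theorem~\ref{theorem:err_bern} while retaining the $1/(\gamma \ln n)$ scaling of Theorem~\ref{theorem:error_main}, and it is the sole substantive difference from the earlier Bernoulli analysis.
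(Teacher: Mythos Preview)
Your proposal is correct and follows essentially the same approach as the paper: the paper's own proof is the single sentence ``follows from Theorem~\ref{theorem:total_cl_dis} and~\ref{theorem:err_bern} in combination with the new bound proved in Theorem~\ref{theorem:fail_bern},'' and you have faithfully unpacked exactly that combination, supplying the union bound over $I$, the weighted version of the per-node inequality from Theorem~\ref{theorem:recon_sum_bound} to get $\sum_{v\in I} N_v f_v \le OPT + n\delta b_{avg}$, and the verification that $t>0$ for arbitrary constant $p$. There is no substantive difference in strategy; you have simply written out what the paper leaves implicit.
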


\begin{proof}
    The proof follows from Theorem \ref{theorem:total_cl_dis} and \ref{theorem:err_bern} in combination with the new bound proved in Theorem \ref{theorem:fail_bern}.
\end{proof}

\section{Experimental Results}
In this section, we evaluate the private algorithms with various parameter settings on both synthetically generated and real-world datasets, comparing their performance against the non-private algorithm. We generate synthetic instances using two distinct methods. The first employs the Matérn cluster point process~\cite{matern,poisson_point}, which creates clustered instances where each cluster simulates a densely populated neighborhood. By adjusting the generation parameters, we can control both the number of neighborhoods (centers) and the number of households (locations) within each neighborhood. The second method uses a Poisson point process, where the number of locations is drawn from a Poisson distribution, and these locations are then uniformly distributed across a simulation window. We also provide experiments for a dataset based on the city of Chiang Mai, Thailand.
Our results show that for all instances, there exists a value of \(\delta\) such that the private reconnection algorithm outperforms the straightforward approach.

\subsection{Synthetic Instances}

We generate the locations’ positions using the Matérn cluster point process \cite{matern}.
The process takes the tuple \((n, \gamma, \delta_{\text{gen}})\) as input, where \(n\) is the expected total number of locations, \(\gamma\) is a scaling parameter, and \(\delta_{\text{gen}}\) defines the clustering radius. Let \(n_{\text{centers}}\) be the number of centers generated and \(n_{\text{daughter}}^i\) the number of locations around center \(i\). We require two conditions:\newline
1. Each center should have at least \(\gamma^2 \ln^2 n\) locations within \(\delta_{\text{gen}}\) in expectation, i.e.
   $\mathbb{E}[n_{\text{daughter}}^i] \ge \gamma^2 \ln^2 n$. \newline
2. The total expected number of locations should be \(n\), i.e.
   $\mathbb{E}\left[\sum_{i=1}^{n_{\text{centers}}} n_{\text{daughter}}^i\right] = n$.

We model \(n_{\text{daughter}}^i\) as a Poisson random variable with parameter \(\lambda_{\text{daughter}}\) and \(n_{\text{centers}}\) as a Poisson random variable with parameter \(\lambda_{\text{centers}}\). Since the expected value of a Poisson distribution equals its \(\lambda\)-parameter, we set
$\lambda_{\text{daughter}} = \gamma^2 \ln^2 n$,
so that \(\mathbb{E}[n_{\text{daughter}}^i] = \gamma^2 \ln^2 n\). To ensure that the total expected number of locations is \(n\), we choose
$\lambda_{\text{centers}} = \frac{n}{\lambda_{\text{daughter}}}$,
since then
$\mathbb{E}[n_{\text{centers}}] \cdot \mathbb{E}[n_{\text{daughter}}^i] = \lambda_{\text{centers}} \cdot \lambda_{\text{daughter}} = n$.

The process first samples \(n_{\text{centers}} \sim \text{Poisson}\left(\frac{n}{\gamma^2 \ln^2 n}\right)\) and distributes these centers uniformly at random on a \(1 \times 1\) simulation window. For each center, it samples \(n_{\text{daughter}} \sim \text{Poisson}(\gamma^2 \ln^2 n)\). Then, for each location, a radial coordinate is drawn uniformly from \([0, \delta_{\text{gen}}]\) and an angular coordinate from \([0, 2\pi]\), which are subsequently converted to Cartesian coordinates. Because each location lies at most \(\delta_{\text{gen}}\) away from its center, the overall simulation window expands to \((1+2\delta_{\text{gen}}) \times (1+2\delta_{\text{gen}})\).

We generate the number of clients per location from a Gaussian distribution with a mean of \(2.5\) and a standard deviation of \(1.5\). The resulting \(b_v\) values are then rounded to the nearest integer and restricted to the interval \([0, 8]\).
For the facility costs at each location, we draw values from a uniform distribution over a specified interval. 

Figure \ref{fig:cluster_ins} depicts example instances generated by the Matérn cluster point process with varying $\gamma$ values. The number of centers $n_{centers}$ depends inversely on $\gamma$. Adjusting $\gamma$ allows for the simulation of different location distributions.

\begin{figure}[h]
    \centering
    \begin{subfigure}{0.45\textwidth}
        \centering
        \includegraphics[width=\linewidth]{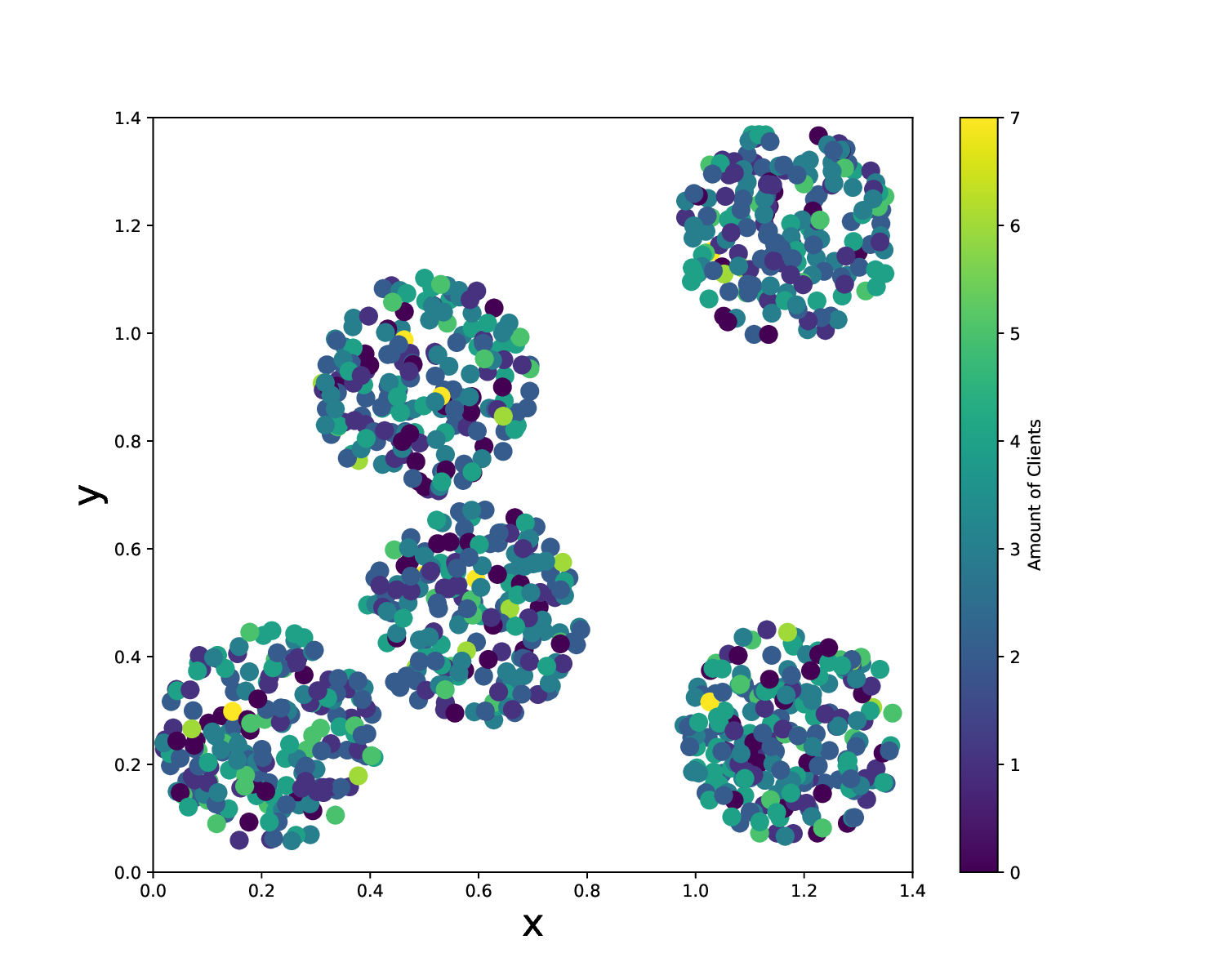}
        \caption{$\gamma = 2$}
    \end{subfigure}
    \begin{subfigure}{0.45\textwidth}
        \centering
        \includegraphics[width=\linewidth]{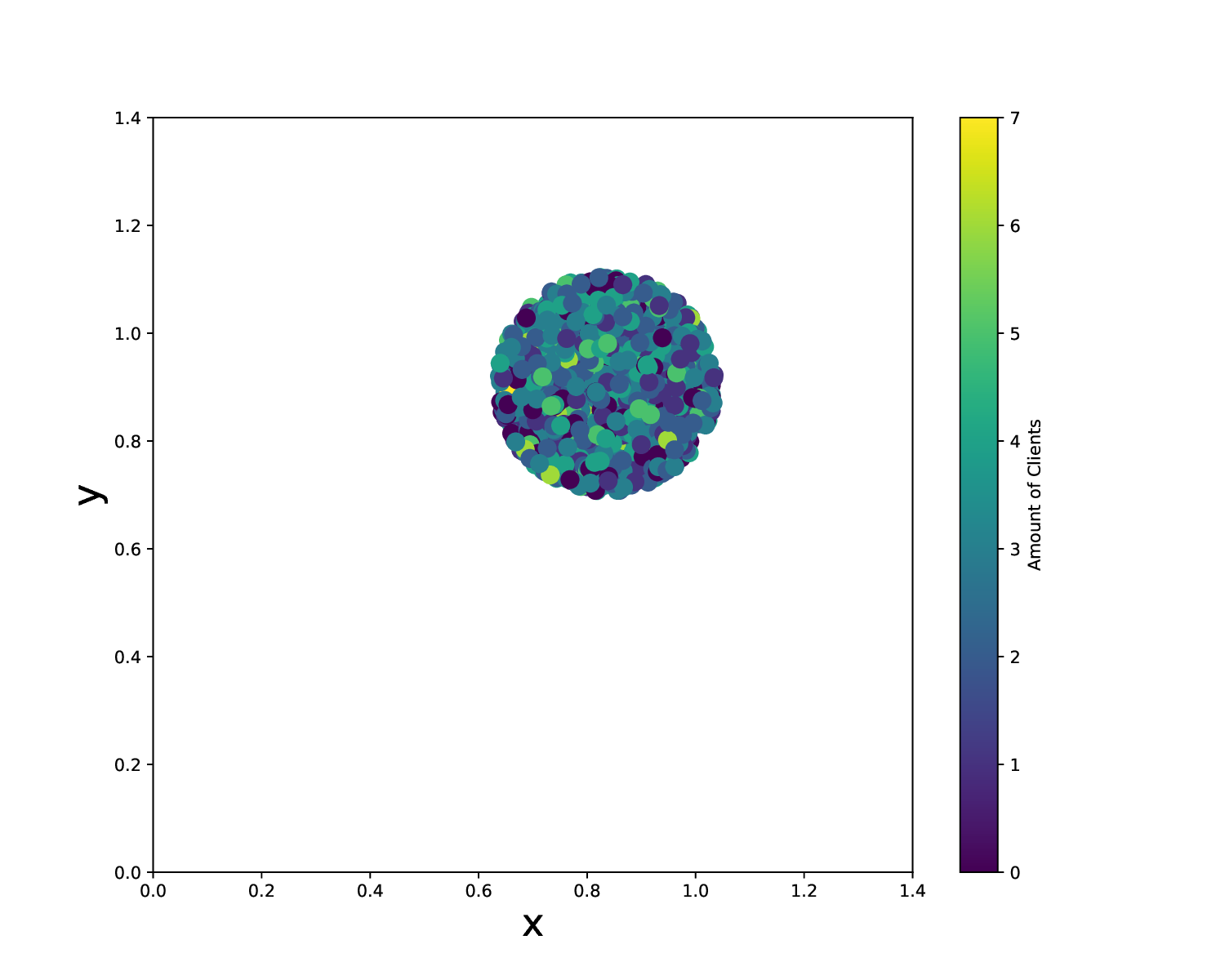}
        \caption{$\gamma = 5$}
    \end{subfigure}
    \caption{Clustered instances created by Matérn cluster point process}
    \label{fig:cluster_ins}
\end{figure}

\paragraph{Our Results}
Figure \ref{fig:delta_plots} shows the normalized costs of the private reconnection algorithm in comparison with the optimal non-private and straightforward private algorithm. In this benchmark $\delta$ is increased from 0 to 1 with a step size of $0.01$ for the private reconnection algorithm. 
For every $\delta$, 1000 instances are generated with $n = 1000, \gamma = 2, \delta_{gen}=0.2$. The private algorithms are executed with $\varepsilon = 0.1$ and $\alpha = 0.1$. It can be seen that for clustered instances the reconnection algorithm outperforms the straightforward approach for any $\delta$.
Furthermore, the reconnection algorithm performs better in comparison to the straightforward approach if no locations with facility costs close to 0 exist. For locations with facility costs of almost 0 it is more likely that they connect all of the other close locations anyways. Therefore, the reconnection part reconnects fewer locations leading to more similar solutions.

\begin{figure} [h]
    \centering
    \begin{subfigure}{0.40\textwidth}
        \centering
        \includegraphics[width=\linewidth]{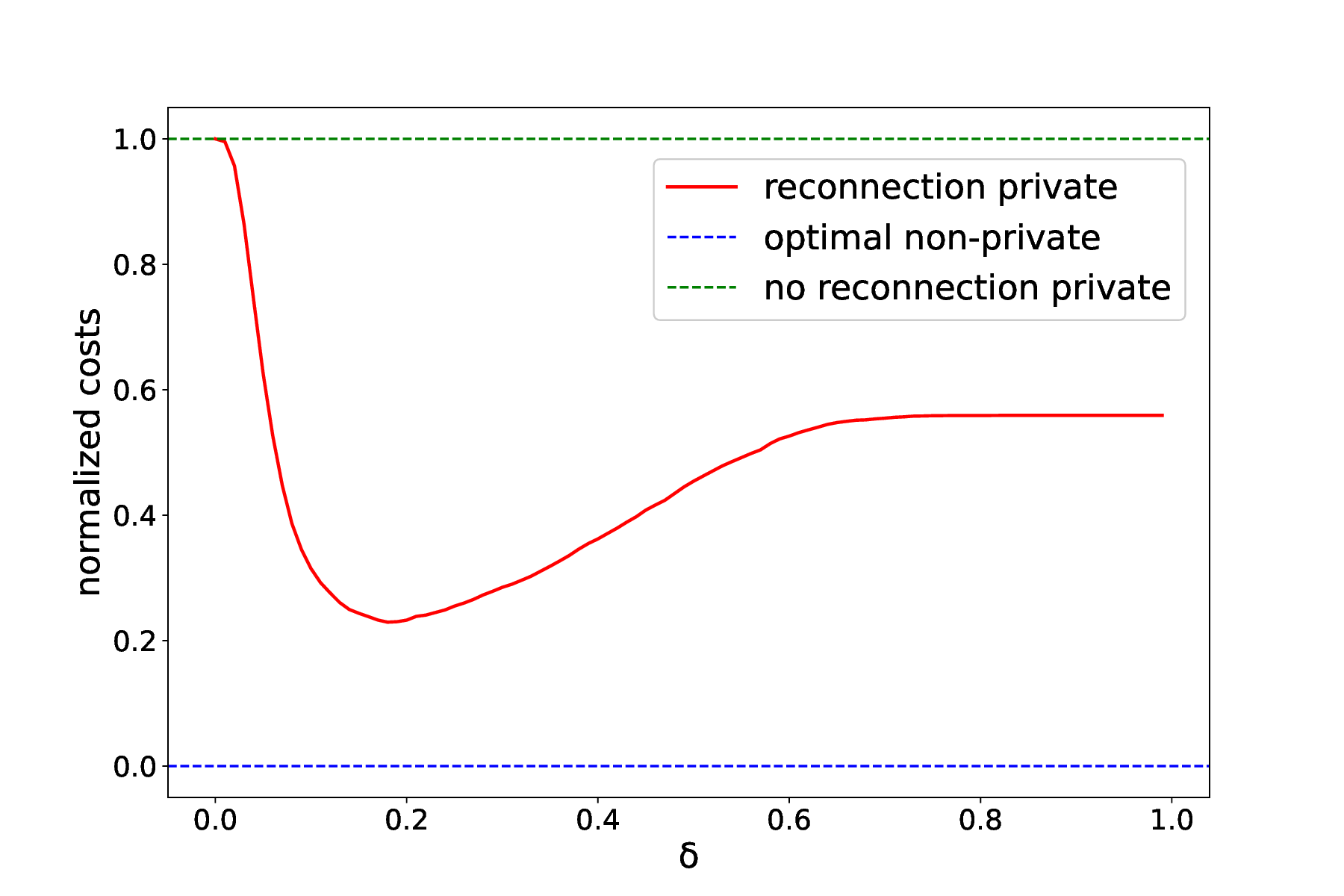}
        \caption{$f_v \in [0,1]$}
    \end{subfigure}
    \begin{subfigure}{0.40\textwidth}
        \centering
        \includegraphics[width=\linewidth]{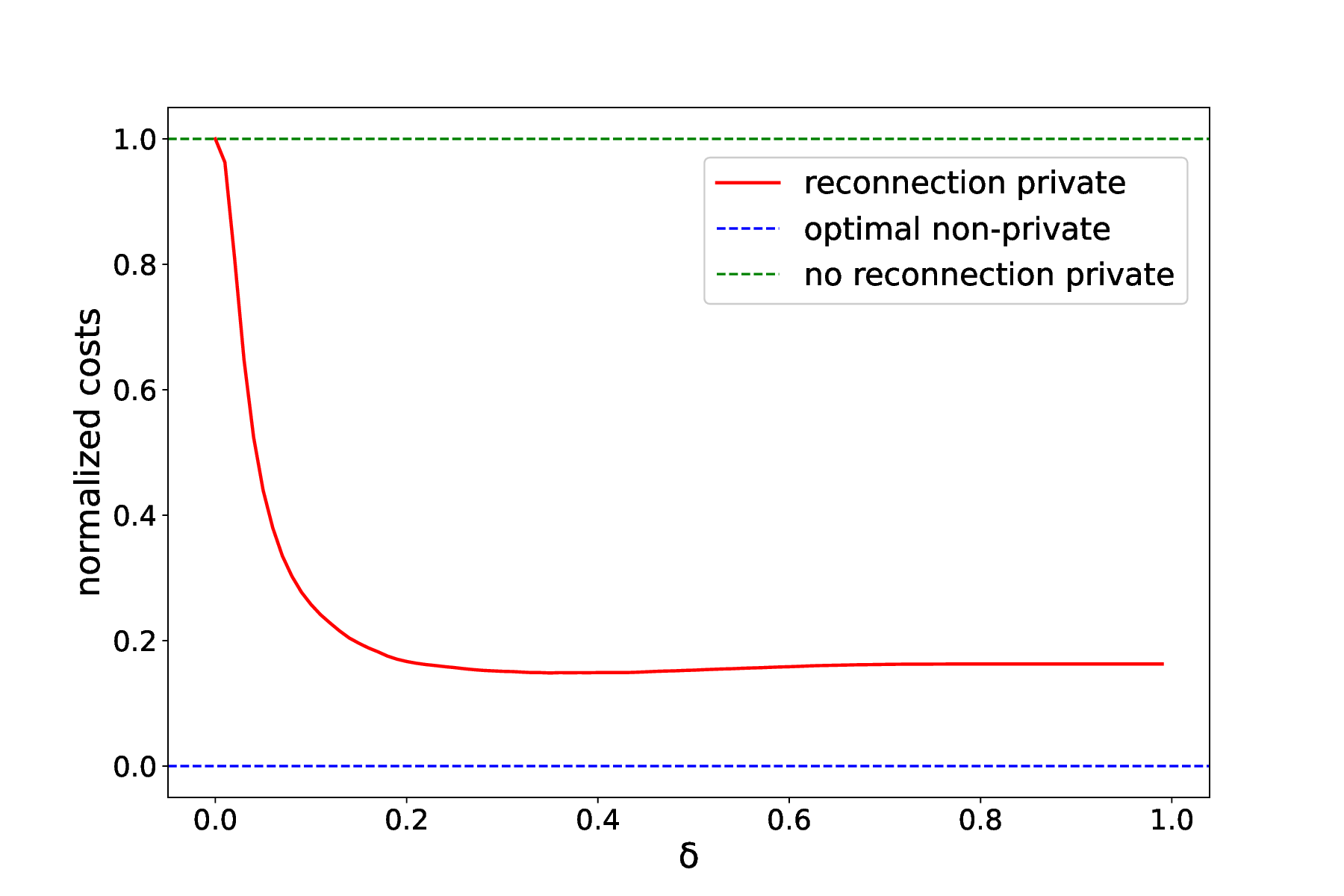}
        \caption{$f_v \in [0.1, 0.3]$}
    \end{subfigure}
    \begin{subfigure}{0.40\textwidth}
        \centering
        \includegraphics[width=\linewidth]{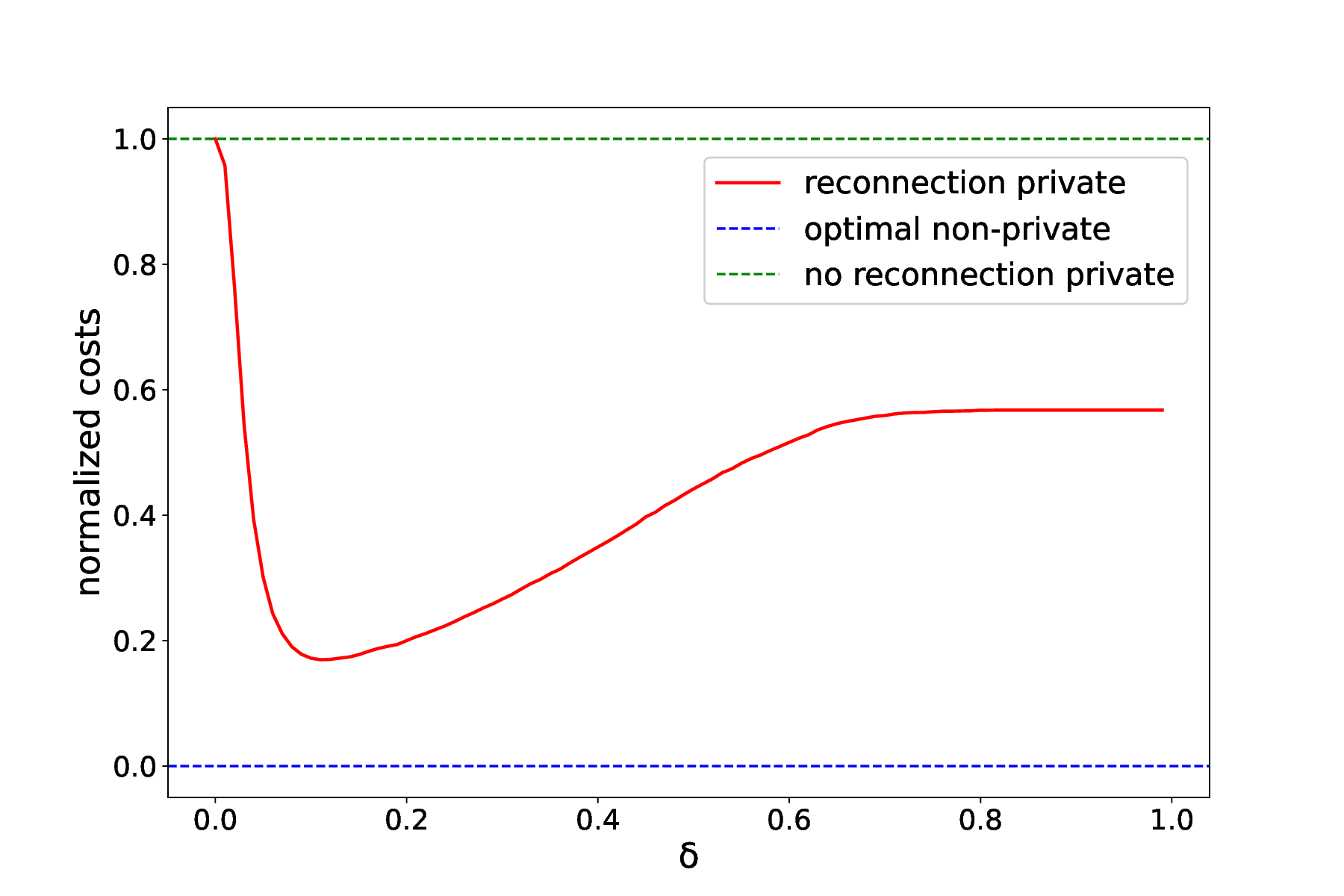}
        \caption{$f_v \in [0, 0.2]$}
    \end{subfigure}
    \caption{Normalized costs for varying $\delta$ and facility cost ranges on clustered instances}
    \label{fig:delta_plots}
\end{figure}
\begin{figure}[h]
    \centering
    \begin{subfigure}[b]{0.43\linewidth}
        \centering
        \includegraphics[width=\linewidth]{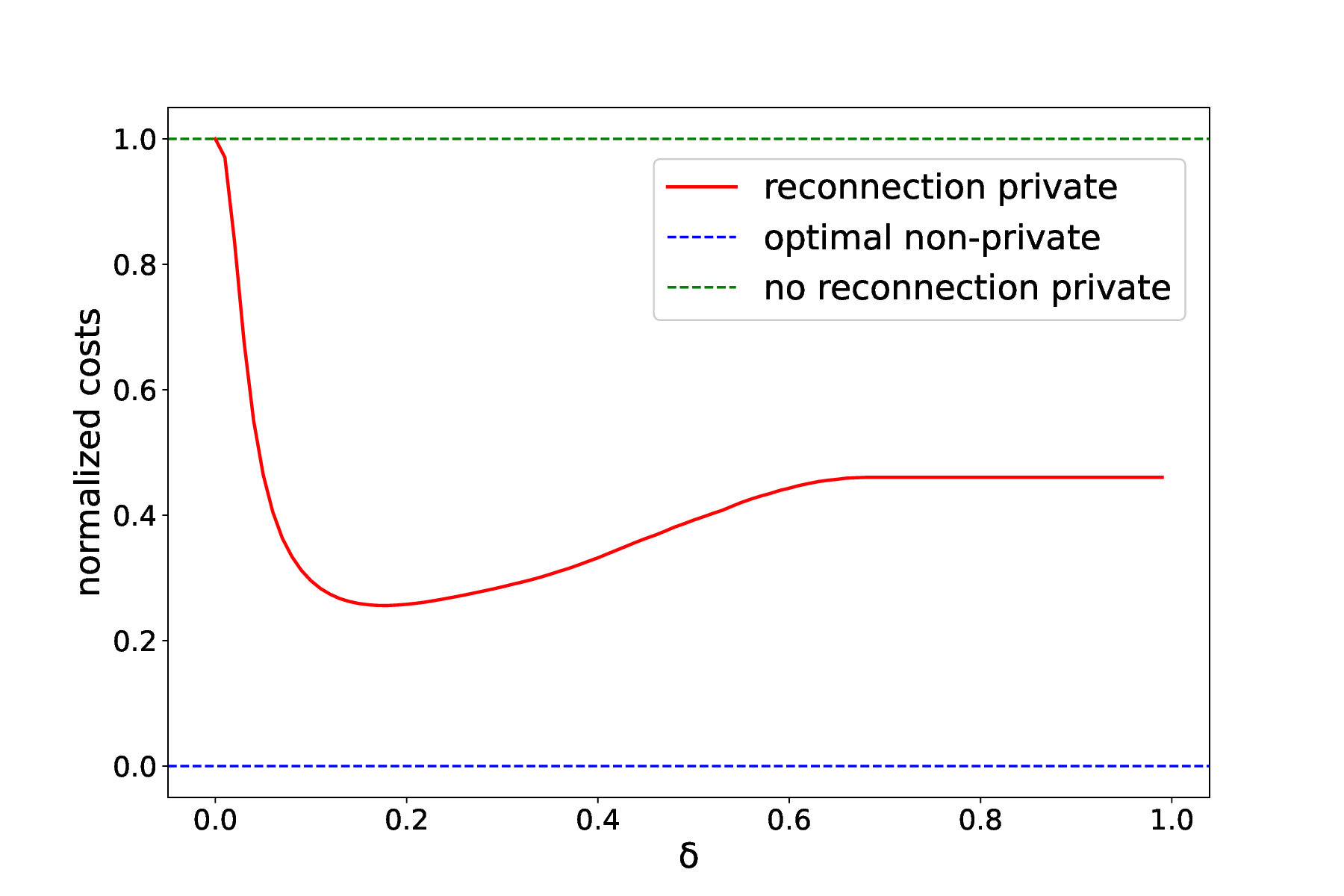}
        \caption{}
        \label{fig:uniform}
    \end{subfigure}
    \hfill
    \begin{subfigure}[b]{0.41\linewidth}
        \centering
        \includegraphics[width=\linewidth]{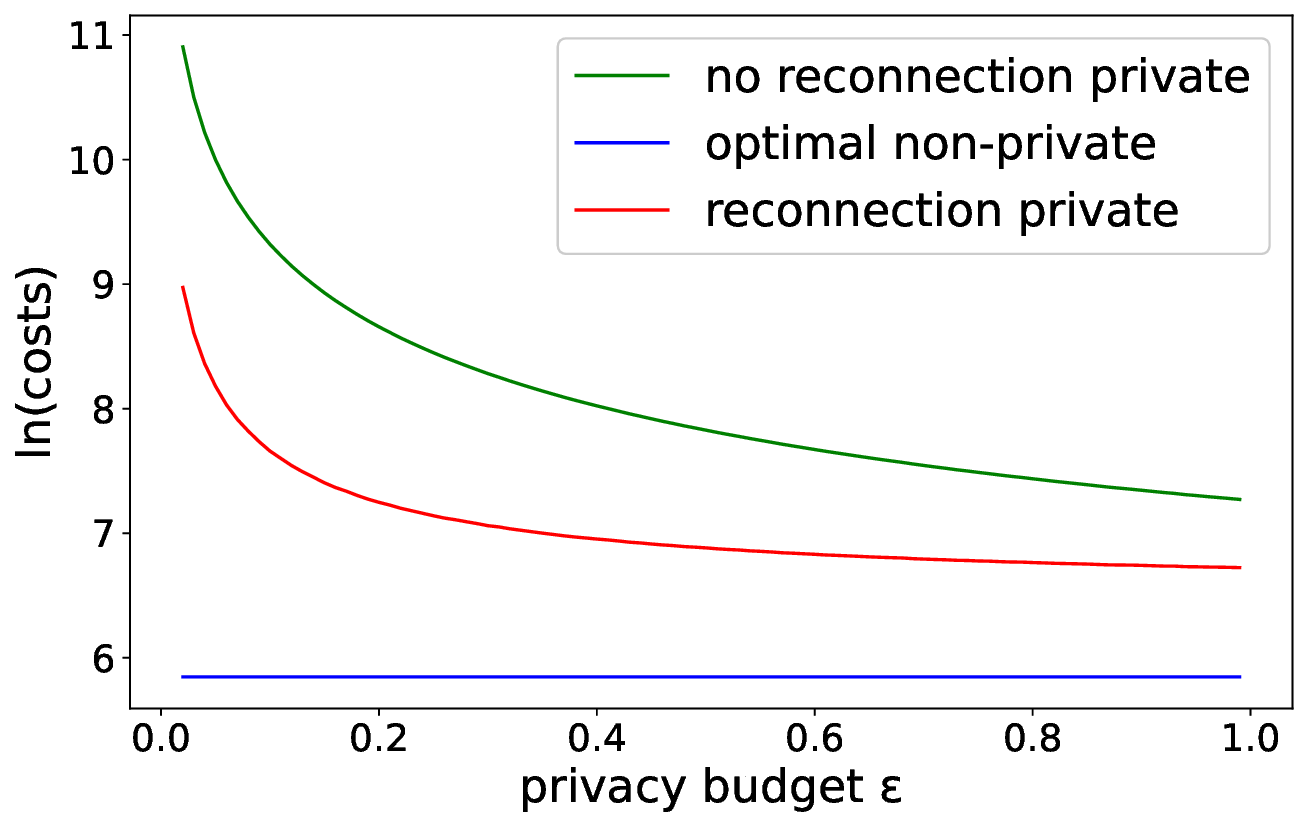}
        \caption{}
        \label{fig:eps_benchmark}
    \end{subfigure}
    \caption{(a) Normalized costs against varying \(\delta\) for Poisson point process instances (b) Impact of different privacy budgets \(\varepsilon\) on clustered instances}
    \label{fig:performance_evaluation}
\end{figure}

Figure \ref{fig:uniform} depicts the performance of the private reconnection algorithm on instances generated by a Poisson point process \cite{poisson_point}. 
The generation process first samples the number of locations $n_{locations}$ according to a Poisson distribution with parameter $\lambda = n$ and then generates $n_{locations}$ locations uniformly at random distributed on a $1\times1$ simulation window.

In Figure \ref{fig:eps_benchmark} the algorithms are compared for an varying privacy budget $\varepsilon$ from 0.01 to 1 with a step width of 0.001. For every $\varepsilon$, 100 instances are generated with $n = 1000, \gamma = 2, \delta_{gen}=0.2, f_v\in[0.1,0.3]$. The private algorithms are executed with $\alpha = 0.1$ and the reconnection algorithm uses $\delta = 0.2$.
The outcome aligns with the theoretical results. 
A smaller $\varepsilon$ leads to a larger coefficient $\frac{2}{\varepsilon}$ in the multiplicative approximation ratio. The reconnection algorithm keeps the influence of this coefficient small by shifting the impact of $\varepsilon$ into the small additive error. 

Figure \ref{fig:n_benchmark} shows the costs depending on the size of the instances. For $n \in [100, 5000]$ with a step size of 100, 500 instances with $\delta_{gen} = 0.2, \gamma=2$ and $f_v\in [0.1,0.3]$ are generated. The private algorithms are executed with $\varepsilon = 0.1$ and $\alpha = 0.1$. Furthermore, the reconnection algorithm is also executed with $\delta = 0.2$. In alignment with the previous results, the reconnection algorithm outperforms the straightforward algorithm. 

\begin{figure}[htbp]
    \centering
    \includegraphics[width=0.5\linewidth]{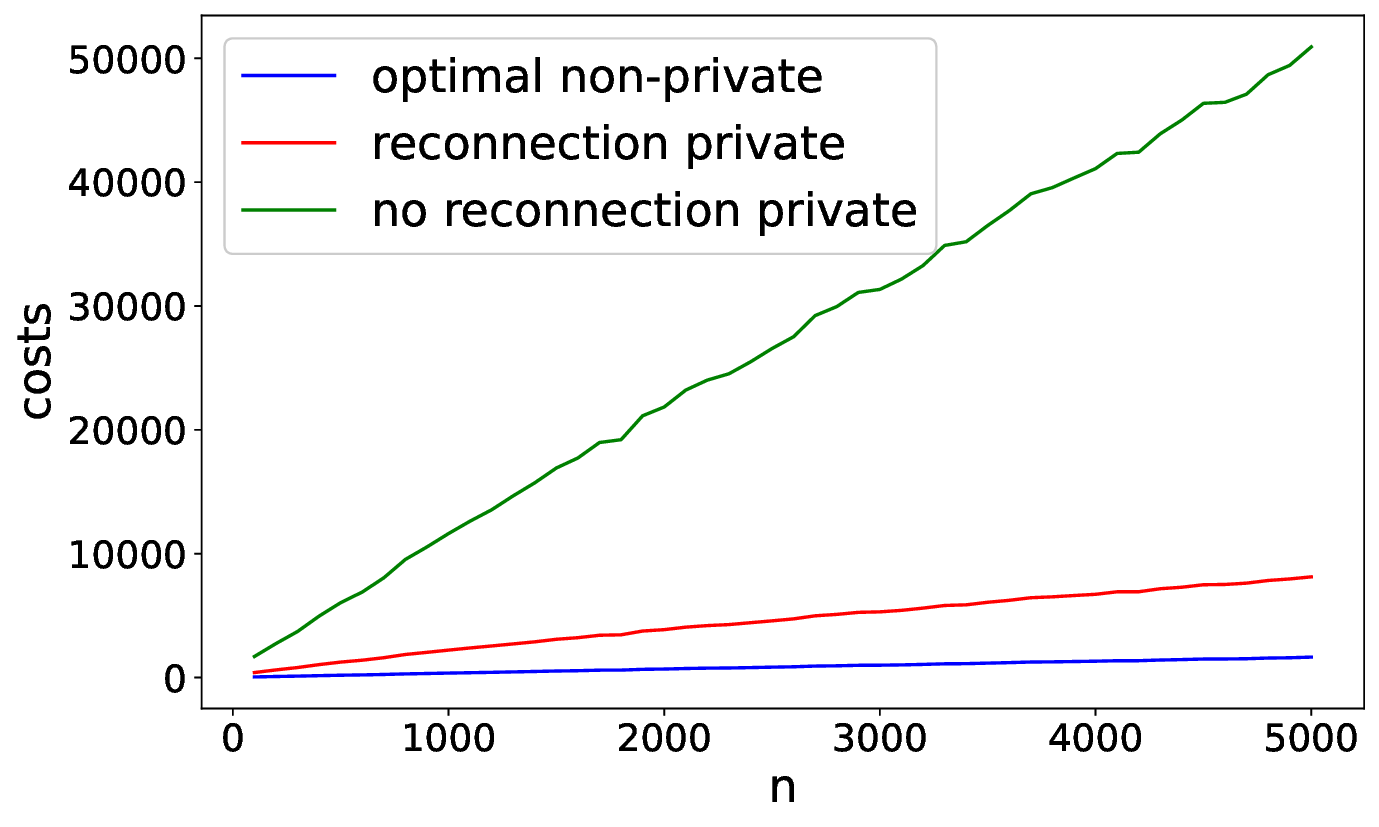}
    \caption{Costs for varying \(n\) on clustered instances}
    \label{fig:n_benchmark}
\end{figure}

\begin{figure}[h]
    \centering
    \includegraphics[width=0.5\linewidth]{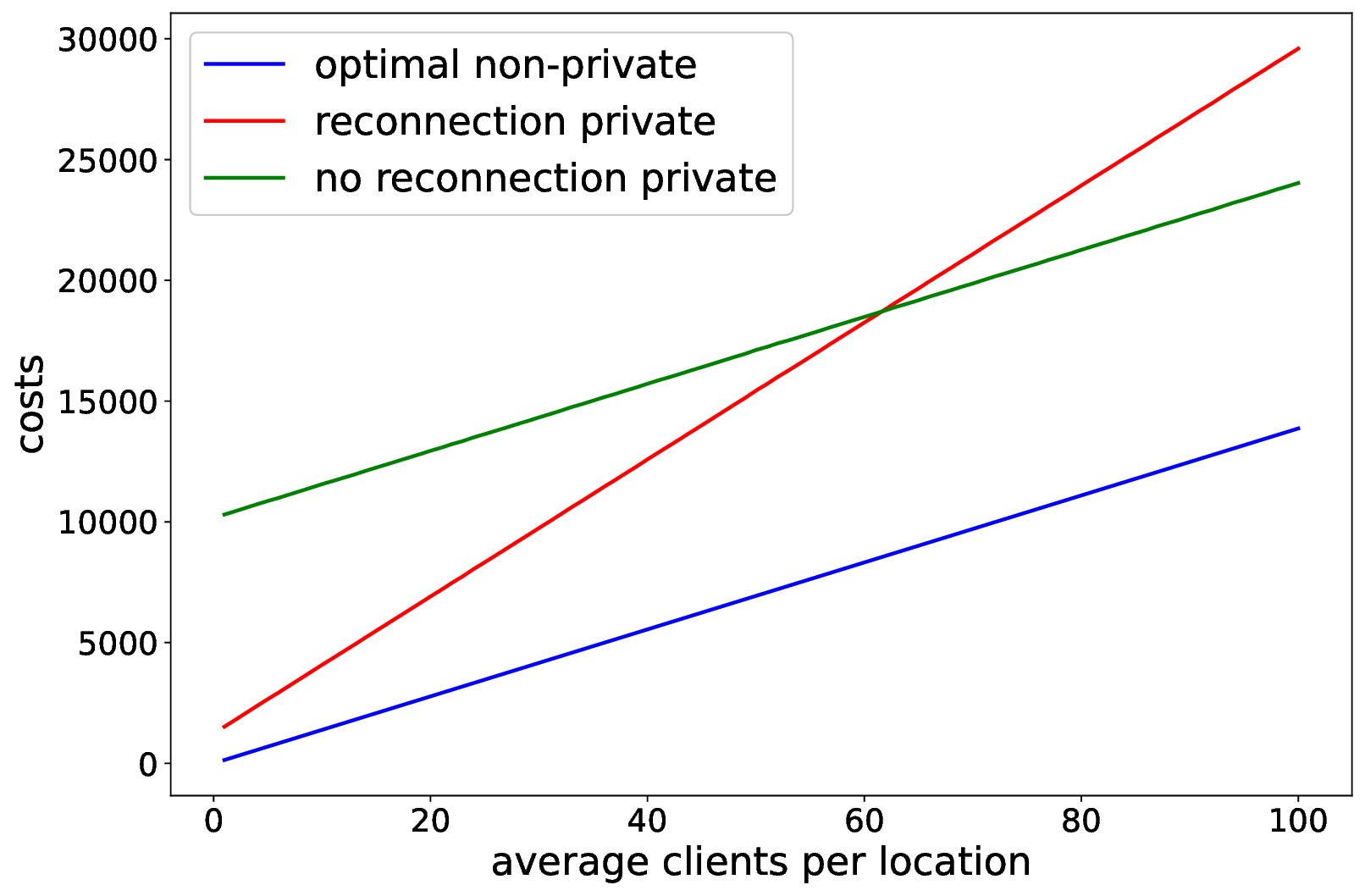}
    \caption{Costs for varying \(b_{avg}\) on clustered instances}
    \label{fig:b_benchmark}
\end{figure}

Figure \ref{fig:b_benchmark} presents the algorithm costs as the average number of clients per location, $b_{avg}$, increases. For this analysis, 100 instances are generated with parameters $n = 1000$, $\delta_{gen} = 0.2$, and $\gamma = 2$. 
In each instance, every location is assigned the same number of clients, $b_{avg}$. The algorithms are executed for values of $b_{avg}$ ranging from 1 to 100, and the costs are averaged across all instances. The private algorithms are run with parameters $\alpha = 0.1$, $\varepsilon = 0.1$, and $\delta = 0.2$.  

The results illustrate the impact of $b_{avg}$ on the additive costs of the private reconnection algorithm \ref{alg:reconn}. 
Additionally, they indicate that for scenarios where the average number of clients is below 75 (e.g., individuals within a household), our algorithm outperforms the straightforward approach.

\subsection{Real-World Instances}
For an application of the algorithm to the real world we generate a set of data according to the technique described in ``Submodularity Property for Facility Locations of Dynamic Flow Networks'' \cite{real_world_data}. It uses data from the project ``Urban Observatory and Citizen Engagement by Data-driven and Deliberative Design: A Case Study of Chiang Mai City'' to generate a set of locations with clients. We expand this data by first normalizing the position of the locations to a $1 \times 1$ window and then uniformly at random assign facility costs from the interval $[0.1, 0.3]$. The dataset consists of 431 different locations. Figure \ref{fig:real_world_instance} depicts the Chiang Mai location and client distribution.
\begin{figure}[htbp]
    \centering
    \includegraphics[width=0.5\linewidth]{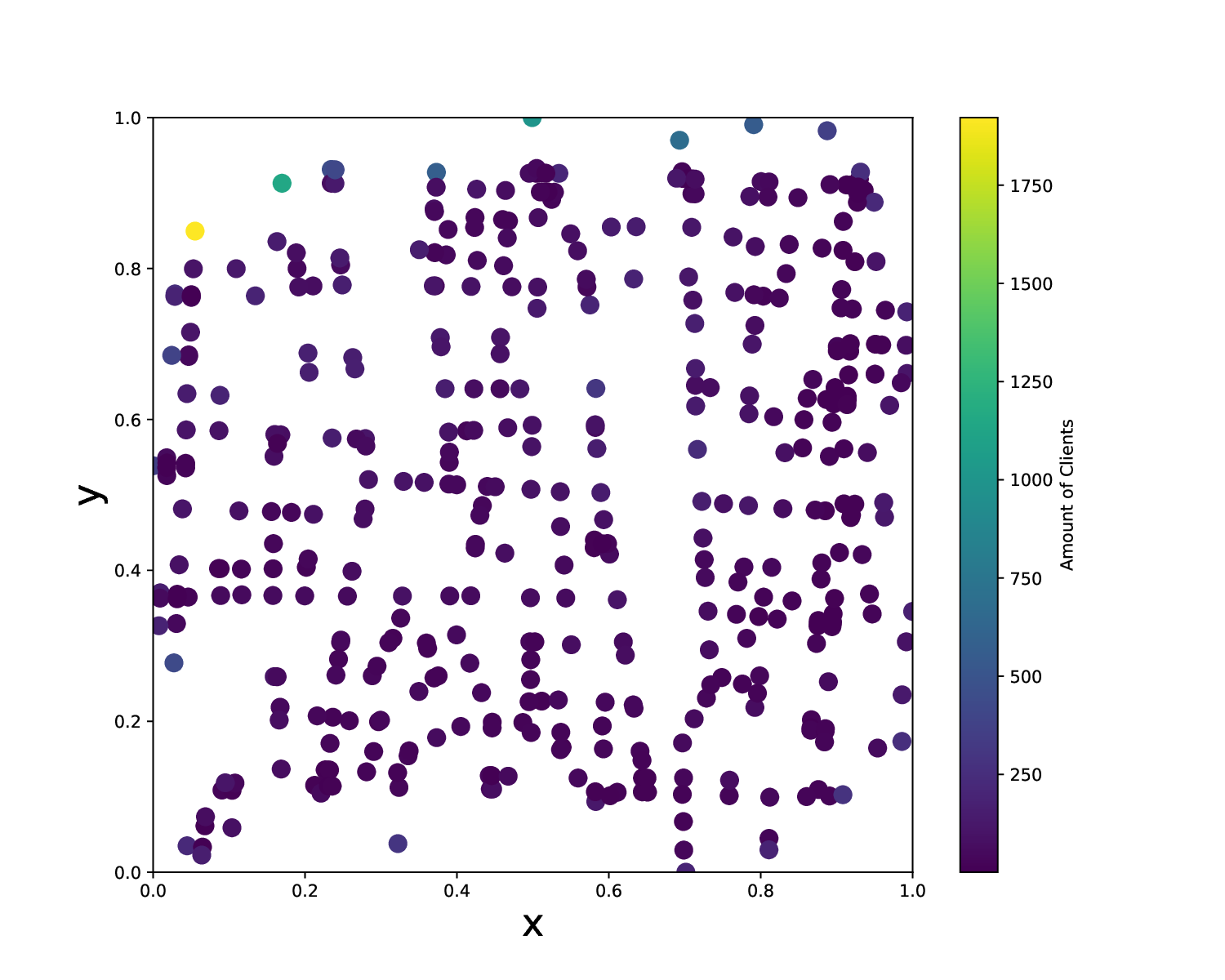}
    \caption{Chiang Mai instance normalized to a $1 \times 1$ window}
    \label{fig:real_world_instance}
\end{figure}
Over 100 instances, the private algorithms were executed with $\varepsilon = 0.1$ and $\alpha = 0.1$.

\paragraph{Our Results} Figure \ref{fig:real_world_plot} shows that for $\delta = 0.1$ the reconnection algorithm outperforms the straightforward approach.
With this we can conclude that by executing the reconnection algorithm with a correctly chosen $\delta$ our private reconnection algorithm outperforms the straightforward approach. 

\begin{figure}[h!]
    \centering
    \includegraphics[width=0.5\linewidth]{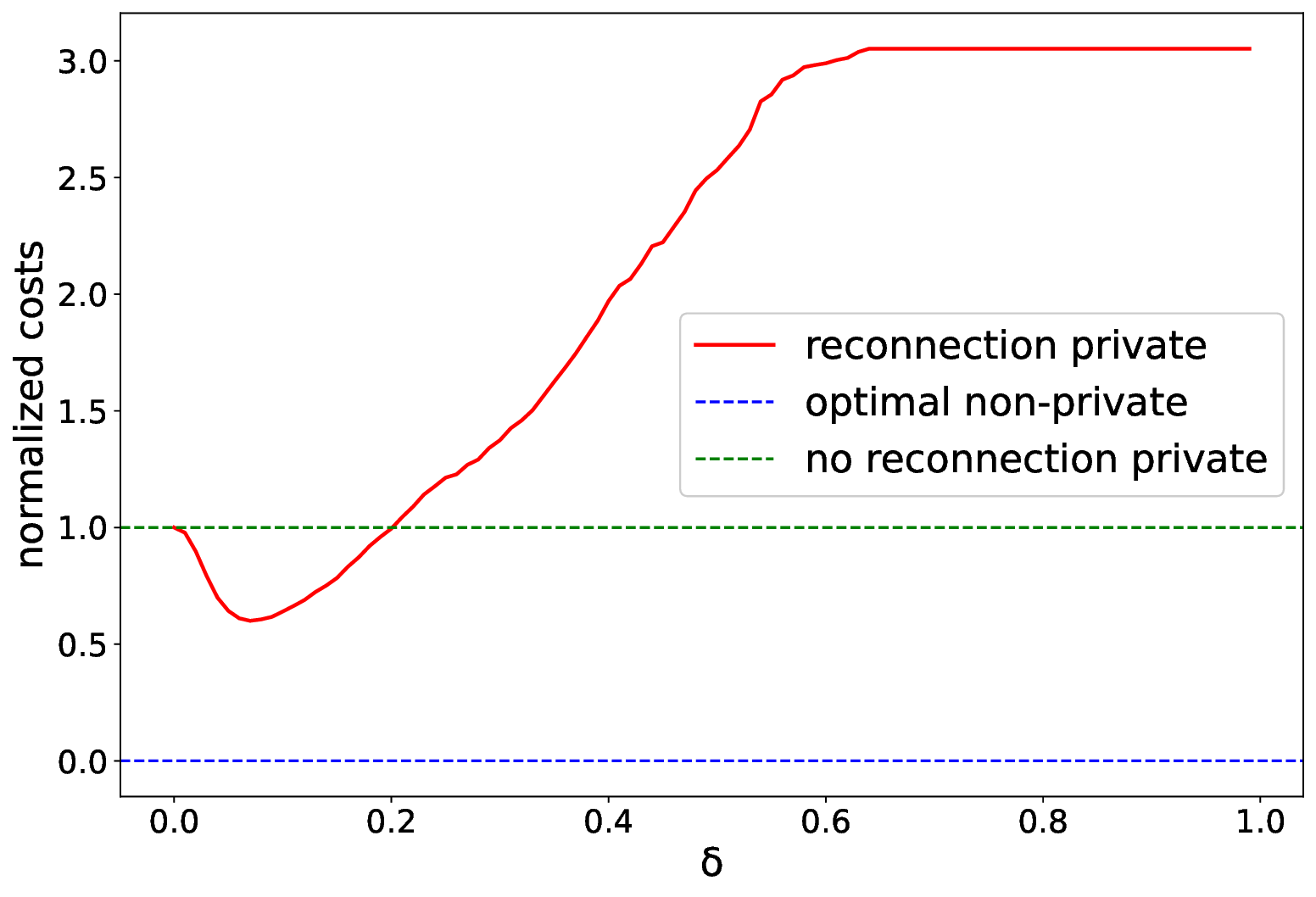}
    \caption{Normalized costs for varying \(\delta\) on real-world instances}
    \label{fig:real_world_plot}
\end{figure}

\section{Conclusion and Future Works}
Without the super-set assumption, releasing results for the facility location problem under LDP leads to large errors because the noise added to each location is significant relative to its original value. In this work, we introduce an algorithm called ``re-connection'' that aggregates several values. This approach ensures that the aggregated original value is sufficiently large, so that the relative impact of the noise is reduced. 
As a result, our algorithm achieves a constant approximation ratio with only a small additive error. In the next version we plan to include further experiments on real-world datasets.
We believe that this technique can be extended to other algorithms operating under LDP, and we are currently exploring its application to additional combinatorial optimization problems.

\begin{credits}
\subsubsection{\ackname} Quentin Hillebrand is partially supported by KAKENHI Grant 20H05965, and by JST SPRING Grant Number JPMJSP2108. Vorapong Suppakitpaisarn is partially supported by KAKENHI Grant 21H05845 and 23H04377.
The authors express their gratitude to Phapaengmuang Sukkasem and Suphanat Chaidee for providing the real-world dataset used in this study. This research was conducted while Kevin Pfisterer was at The University of Tokyo, and the authors also thank Hiroshi Imai and Kunihiko Sadakane for hosting him. 

\end{credits}

%
%
\bibliographystyle{splncs04}
\bibliography{references}

\end{document}